\newtheorem{thm}{Theorem}[section]
\theoremstyle{lemma}
\theoremstyle{conjecture}
\theoremstyle{definition}
\theoremstyle{convention}
\theoremstyle{corollary}
\newtheorem{cor}[thm]{Corollary} 
\newtheorem{lem}[thm]{Lemma}
\newtheorem{defn}[thm]{Definition}
\newtheorem{rem}[thm]{Remark}
\newtheorem{conj}[thm]{Conjecture}
\begin{document}
\normalem

\title{Generalized Second Law for Cosmology}
\author{Raphael Bousso}%
 \email{bousso@lbl.gov}
\affiliation{ Center for Theoretical Physics and Department of Physics\\
University of California, Berkeley, CA 94720, USA 
}%
\affiliation{Lawrence Berkeley National Laboratory, Berkeley, CA 94720, USA}

\author{Netta Engelhardt}
\email{engeln@physics.ucsb.edu}
\affiliation{Department of Physics, University of California, Santa Barbara, CA 93106, USA 
}%
\bibliographystyle{utcaps}

\begin{abstract}
We conjecture a novel Generalized Second Law that can be applied in cosmology, regardless of whether an event horizon is present: the generalized entropy increases monotonically outside of certain hypersurfaces we call past Q-screens. A past Q-screen is foliated by surfaces whose generalized entropy (sum of area and entanglement entropy) is stationary along one future null direction and increasing along the other.  We prove that our Generalized Second Law holds in spacetimes obeying the Quantum Focussing Conjecture. An analogous law applies to future Q-screens, which appear inside evaporating black holes and in collapsing regions.\\[5ex]\begin{center} {\normalsize \em Dedicated to the memory of Jacob Bekenstein} \end{center}
\end{abstract}

\maketitle


\tableofcontents

\begin{spacing}{1.1}
\section{Introduction}

The thermodynamics of gravitating systems is a fundamental link between quantum phenomena and gravity. This connection is manifest in various contexts (\textit{e.g.}~\cite{Bek72,Haw75,CEB1, Mal01}), suggesting that it is borne of an underlying principle of full quantum gravity. Hawking's classical area theorem~\cite{Haw71}, an early indication of this connection, states that the area of a black hole event horizon cannot decrease. 

Hawking's theorem holds in spacetimes obeying the null curvature condition, $R_{ab} k^a k^b \geq 0$ for any null vector $k^a$. This will be the case if the Einstein equations are obeyed with a stress tensor satisfying the Null Energy Condition (NEC), 
\begin{equation}
T_{ab} k^a k^b \geq 0~.
\end{equation}
The NEC is satisfied by ordinary classical matter, but it is violated by valid quantum states (e.g., in the Standard Model). In particular, the NEC fails in a neighborhood of a black hole horizon when Hawking radiation is emitted. Indeed, the area of the event horizon of an evaporating black hole decreases, violating the Hawking area law. 

Bekenstein proposed that the area of an event horizon should be interpreted as an entropy: $S_\mathrm{BH} \equiv A_{EH}/4G\hbar$. He further proposed the Generalized Second Law of thermodynamics (GSL)~\cite{Bek72,Bek73,Bek74},
\begin{equation}
d S_\mathrm{gen} \geq 0~,
\label{eq-gslintro}
\end{equation}
in which the Bekenstein-Hawking entropy $S_{BH}$ of black holes is properly included in the total entropy budget:
\begin{equation}
S_\mathrm{gen} \equiv S_\mathrm{out} + \frac{A_{EH}}{4G\hbar}~.
\label{eq-sgn}
\end{equation}
The quantity $S_{\mathrm{out}}$ is the von Neumann entropy of the matter outside the black hole. With this generalization, when matter disappears behind an event horizon, an increase in horizon area can compensate for the loss of matter entropy. Thus, the GSL can prevent what would otherwise be a violation of the (ordinary) second law to an external observer.

In light of the breakdown of the area theorem during black hole evaporation, Bekenstein's GSL can also be viewed as the semiclassical extension of Hawking's area theorem. The GSL remains valid even when the NEC is violated and the event horizon shrinks. This is because the exterior entropy $S_\mathrm{out}$ is increased by the Hawking radiation, more than compensating for the area loss~\cite{Pag76}.\footnote{With unitary evolution, entropy cannot increase except under coarse-graining.  For recently formed black holes, Eq.~(\ref{eq-sgn}) suffices since the area term implicitly entails coarse-graining. At late times, the radiation will be the larger system; if the evaporation process is unitary then this era is not strictly in the semiclassical regime~\cite{Pag93}. Nevertheless, Eq.~(\ref{eq-gslintro}) continues to hold under coarse-graining, in the same sense in which the ordinary second law holds in the evaporation of an ordinary matter object in a pure state.} Proofs of the GSL exist for nontrivial limiting regimes; see~\cite{Wal09} for a review and~\cite{Wal10,Wal11} for recent work.

The area theorem and the GSL are associated with the event horizon, or more generally with causal horizons such as the Rindler horizon of an accelerated observer. This limits their applicability: not all observers accelerate eternally, and not all spacetimes have an event horizon. In particular, cosmological solutions (except for asymptotically de~Sitter universes) do not have an event horizon. 

No general formulation of a second law of thermodynamics has been known in cosmology. In the absence of asymptotic regions, the entire spacetime is highly dynamical, so matter and entropy can freely move around. There do not exist natural divisions into subsystems whose entropy could be tracked. In a spatially homogeneous universe, one could consider the comoving entropy density, but this is an approximate notion. It has no fundamental status, and its definition breaks down as density perturbations grow strong. 

Even if a causal horizon can be defined, its location is ``teleological'': it depends on the arbitrarily distant future. Thus, the very notion of a black hole requires a certain asymptotic structure of spacetime and is not rigorously defined in cosmology. 

Since the above limitations stem from the event horizon's dependence on the asymptotic boundary, it would be desirable to identify a more local alternative to the event horizon: a geometric object that satisfies some area law or GSL, but which is rigorously defined in general spacetimes without reference to an asymptotic region. 

Holographic screens~\cite{CEB2} are quasi-locally defined, and can be constructed in general cosmological solutions. (See~\cite{Hay93,AshKri02} for pioneering work on a more restrictive class of quasi-local horizons.) Moreover, we recently proved that future (or past) holographic screens obey an area theorem~\cite{BouEng15a,BouEng15b}, assuming the NEC holds. Thus, they satisfy the criteria outlined above: for a black hole, the holographic screen shares key properties with the event horizon. But holographic screens require no asymptotic structure and exist in more general settings. Moreover, our proof demonstrated that a holographic screen is uniquely associated with each choice of null foliation; it can be constructed simply by maximizing the area on each null slice.

In this paper, we turn to the semiclassical case, where the NEC need not hold. Our area theorem, like Hawking's, may fail in this case. We consider the question of whether holographic screens satisfy a Generalized Second Law instead. We define the notion of a {\em Q-screen}, a quantum corrected holographic screen. A Q-screen $H$ is a hypersurface foliated by spatial surfaces $\sigma(r)$.  Each $\sigma(r)$ extremizes the generalized entropy, under variations along a null hypersurface $N(r)$ orthogonal to it. A Q-screen is called {\em past} (or {\em future}) if the generalized entropy increases (or decreases) along the opposite null direction orthogonal to $\sigma$.

We conjecture that any past or future Q-screen satisfies a novel Generalized Second Law: the generalized entropy along the Q-screen increases monotonically. Assuming the Quantum Focussing Conjecture (QFC)~\cite{BouFis15a} (a quantum extension of the Bousso bound), we show that a Q-screen is again uniquely associated with any null foliation of the spacetime; this in turn implies that our novel GSL holds.

\paragraph*{Outline} In Sec.~\ref{sec-conjecture}, we follow Bekenstein's step of replacing area with generalized entropy in appropriate definitions and statements. First, this modifies the notion of holographic screen by a quantum correction, leading to our definition of Q-screens. Second, the statement that the area increases along a past holographic screen becomes our conjecture of a novel Generalized Second Law: the generalized entropy increases monotonically outside of a past Q-screen. We believe that this is the first thermodynamic law that applies in arbitrary spacetimes, and in particular in cosmology. We also conjecture that the generalized entropy outside of future Q-screens increases monotonically (but towards the past).

In Sec.~\ref{sec-examples}, we consider some examples. In Sec.~\ref{sec-bh}, we show that the classical area law for holographic screens fails for an evaporating black hole. We construct a Q-screen, and we verify that it satisfies the new GSL, due to the contribution of the Hawking radiation to $S_\mathrm{out}$. In Sec.~\ref{sec-cosmo} we construct a Q-screen in cosmology. We find that the new GSL is satisfied, because the area increase greatly dominates over any changes in entropy (much as Bekenstein's GSL tends to be comfortably satisfied when matter enters a black hole). 

In Sec.~\ref{sec-proof}, we show that our GSL follows from the recently proposed Quantum Focussing Conjecture (QFC)~\cite{BouFis15a}. The QFC itself has not been proven generally, but no counterexamples are known. Moreover, the QFC is plausible in that it unifies several nontrivial statements for which proofs do exist, such as Bekenstein's GSL in certain regimes~\cite{Wal10,Wal11}, the Bousso bound in the hydrodynamic regime~\cite{FMW,BouFla03,StrTho03}, and the Quantum Null Energy Condition~\cite{BouFis15b}. 

Throughout this paper, we work in (3+1)-dimensions; the generalization to higher dimensions is trivial. A {\em hypersurface} has codimension 1 in the spacetime; by {\em surface} we always mean a codimension 2 spatial surface (except in the term {\em Cauchy surface}, which as usual refers to an achronal hypersurface).

\paragraph*{Discussion} 

Our GSL extends a central notion of thermodynamics to general spacetimes, particularly to cosmological settings. It adds another link to what appears to be a rich interplay between geometry, energy, and quantum information (e.g.\ \cite{Bek81,Tho93,Sus95,CEB1,CEB2,FMW,Cas08,Wal10QST,%
  Wal11,BCFM1,BCFM2,BouFis15a,BouFis15b}). This web of relations must originate with the emergence of classical spacetime from an underlying quantum gravity theory---an expectation largely borne out in the main example we have of such a theory, the AdS/CFT correspondence~\cite{Mal97,SusWit98,Mal01,RyuTak06,HubRan07}).  A broadened understanding of the second law may yield insights on how to construct a quantum gravity theory for more realistic spacetimes.

We were led to our conjecture as a natural generalization of the area law for holographic screens~\cite{BouEng15a}, which we recently identified and proved. But as far as we can see, neither our area law nor our GSL is ``necessary'' in the same sense as their analogues for event horizons were: Hawking's area law (in hindsight) encodes the second law for purely gravitational systems, and Bekenstein's GSL preserves the second law when both matter and black holes are present. By contrast, it is not clear which ``ordinary second law'' (or other well-established principle) would be violated if we failed to consider the generalized entropy outside Q-screens. 

There may not be a good answer to this question, short of a full quantum gravity theory. This is similar to the difference between entropy bounds that apply strictly to black hole horizons (and which are thus suggested by the GSL), and the more general entropy bounds that appear to hold far more broadly~\cite{CEB1,CEB2,BouFis15a}, for no reason discernible in an existing framework. 

The Bousso bound does single out holographic screens, drawing attention to these particular hypersurfaces and leading us to their further study. Similarly, the QFC singles out Q-screens as preferred hypersurfaces in the spacetime. Let us discuss this in more detail. 

\paragraph*{Relation to the Bousso bound and the QFC} The Bousso bound provides a notion of entropy associated to the area of an arbitrary surface $\sigma$. The area of $\sigma$ yields a bound on the entropy of its lightsheets, null surfaces generated by nonexpanding light-rays orthogonal to $\sigma$:
\begin{equation}
S_{\mathrm{lightsheet}}\leq \frac{A_{\sigma}}{4 G\hbar}~.
\end{equation}
Any surface has four null congruences emanating from it (future-outwards, future-inwards, past-outwards, and past-inwards). At least two of these must be lightsheets (see e.g.\ Fig. 1 of~\cite{CEB1}). 

For example, the event horizon of a classical black hole to the past of any cross-section $\sigma$ is a lightsheet of $\sigma$. In this special case, the Bousso bound implies that the area of the horizon more than compensates for the matter entropy that entered the black hole prior to $\sigma$, consistent with the GSL.

The Bousso bound in particular distinguishes {\em marginal surfaces}: $\sigma$ is marginal if one of its orthogonal null congruences has locally vanishing expansion $\theta=0$ everywhere. In other words, $\sigma$ locally extremizes the area on a null hypersurface $N$ orthogonal to it. Therefore one can regard $N$ as the union of two valid lightsheets. By the Bousso bound, the area of $\sigma$ bounds the entropy on an entire null slice $N$.

Given a null foliation, one can find the surface $\sigma(r)$ of maximal area on each null slice $N(r)$. The union of the $\sigma(r)$ forms a hypersurface $H=\bigcup \sigma(r)$ (not necessarily of definite signature), termed a holographic screen hypersurface in~\cite{CEB2}. The Bousso bound implies that at every time $r$, all the information about the null slice $N(r)$ can be stored on the surface $\sigma(r)$, at a density of no more than one bit per Planck area. This construction makes concrete earlier speculations that the world is like a hologram~\cite{Tho93,Sus95,FisSus98}. Our recent area theorem applies to holographic screen hypersurfaces $H$ that are subject to an additional refinement, analogous to the distinction between past and future event horizons~\cite{BouEng15a,BouEng15b}.

The QFC is a quantum generalization of the Bousso bound, which reduces to it when matter systems are well-isolated on the lightsheet, or when the entropy can be treated in a hydrodynamic approximation. It is based on a quantum generalization of the notion of expansion, defined using the generalized entropy rather than the area of surfaces. All relevant definitions will be presented in the main text. 

Under the QFC, a surface $\sigma$ that maximizes the generalized entropy outside a null slice $N$ is a preferred cross-section of $N$. The union of such surfaces $\sigma(r)$ over a null foliation $N(r)$ defines a quantum-corrected holographic screen $H=\bigcup \sigma(r)$, which we call Q-screen. Our GSL conjecture states that the entropy outside any past or future Q-screen is monotonic in $r$.

\section{Generalized Second Law for Q-Screens} \label{sec-conjecture}

In this section, we state our conjecture. We will begin by reviewing two important quantities that can be associated with a surface, given minimal additional structure: the {\em generalized entropy}, $S_\mathrm{gen}$, and the {\em quantum expansion}, $\Theta$. A {\em quantum marginal surface}, $\sigma$, has vanishing quantum expansion in one null direction. If the quantum expansion in the other null direction has definite sign, then $\sigma$ is said to be {\em marginally quantum trapped} or {\em antitrapped}. 

Quantum marginal surfaces combine to form a {\em Q-screen}, a 3-dimensional hypersurface that need not have definite signature. A Q-screen is called {\em future (past)} if its constituent marginal surfaces are in addition marginally quantum (anti)trapped. We conjecture that the generalized entropy outside a future or past Q-screen always increases.

\subsection{Generalized Entropy and Quantum Expansion}
\label{sec-genen}

We will begin by extending~\cite{Wal10QST,MyePou13,EngWal14,BouFis15a} the notion of generalized entropy to surfaces that need not lie on an event horizon. We consider a globally hyperbolic spacetime (which may be extendible to one that is not: e.g., a domain of dependence in asymptotically Anti-de Sitter space). Let $\sigma$ be a spacelike surface that splits a Cauchy surface $\Sigma$ into two portions; see Fig.~\ref{fig-one}. We may choose either side of $\sigma$ arbitrarily and refer to this portion of $\Sigma$ as $\Sigma_\mathrm{out}$.
\begin{figure}[ht]

\includegraphics[width=.5 \textwidth]{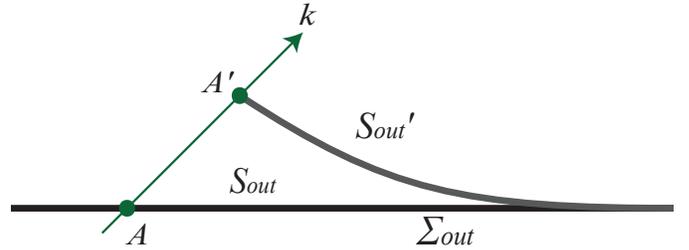}
\caption{The generalized entropy $S_\mathrm{gen}$ is the area $A$ (in Planck units) of a surface that splits a Cauchy surface, plus the von Neumann entropy $S_\mathrm{out}$ of the quantum fields on one side $\Sigma_\mathrm{out}$.  The quantum expansion $\Theta_k$ is the rate at which $S_\mathrm{gen}$ changes as the splitting surface is varied in the orthogonal null direction $k^a$.}\label{fig-one}
\end{figure}

\begin{defn}
The {\em generalized entropy} is the area of $\sigma$ (in Planck units), plus the von Neumann entropy of the quantum state on $\Sigma_\mathrm{out}$:
\begin{equation}
S_\mathrm{gen} \equiv S_\mathrm{out} + \frac{A}{4G\hbar} +\mathrm{counterterms}~,
\label{eq-sgendef}
\end{equation}
where
\begin{equation}
S_\mathrm{out} = - \mathrm{tr}\, \rho_{\rm out} \log \rho_{\rm out}~,
\label{eq-soutdef}
\end{equation}
and the reduced density operator $\rho_{\rm out}$ is the restriction of the global quantum state $\rho$ to $\Sigma_\mathrm{out}$:
\begin{equation}
\rho_{\rm out} = \mathrm{tr}_{\neg \rm out}\, \rho~,
\label{eq-routdef}
\end{equation}
where the trace is taken over the field theory degrees of freedom in the complement of $\Sigma_\mathrm{out}$ on $\Sigma$.\end{defn}

The von Neumann entropy $S_\mathrm{out}$ diverges in regular global states. For example, the entanglement of short-distance degrees of freedom across $\sigma$ in the vacuum contributes a divergence proportional to the area of $\sigma$ in units of the short distance cutoff~\cite{Sorkin83, BKLS86, Srednicki93}. There is compelling evidence that this divergence is cancelled by a renormalization of Newton's constant in the area term~\cite{Susskind:1994sm,Jacobson:1994iw,Solodukhin:1994yz}. From this viewpoint, the Bekenstein-Hawking entropy is the first in a series of counterterms. Subleading divergences are cancelled by other geometric counterterms~\cite{FS94,DLM95} which can be thought of as higher-curvature corrections to the area term \cite{Wald93, JKM93, IW94, IW95}. A review of these arguments and further references can be found in the Appendix of Ref.~\cite{BouFis15a}. Below we will assume that $S_\mathrm{gen}$ is indeed finite and independent of the UV cutoff.

Generalized entropy was originally defined for the case where $\sigma$ is a cross-section of a black hole event horizon~\cite{Bek72}. In this case one takes $\Sigma_\mathrm{out}$ to be the exterior of the black hole. The GSL as formulated by Bekenstein is the statement that $S_\mathrm{gen}$ cannot decrease under forward time evolution of $\Sigma$ along the event horizon.  

We now turn to defining the quantum expansion. There are four families of light-rays emanating orthogonally from the surface $\sigma$: future-outward, future-inward, past-outward, and past-inward. Consider one one of these four families, with tangent vector $k^a$; and consider one of its light-rays, emanating from the point $y_1\in\sigma$. Then deform $\sigma$ in a neighborhood of $y_1$, with infinitesimal area $\cal A$, by an infinitesimal affine distance $\lambda$ along the light-ray; see Fig.~\ref{fig-one}. This yields a new surface with generalized entropy $S_\mathrm{gen}'$ (computed with respect to the same side as $\Sigma_\mathrm{out}$). 
\begin{defn}
The {\em quantum expansion} is given by
\begin{equation}
\Theta_k[\sigma;y_1]\equiv \lim_{{\cal A} \to 0} \left.\frac{4G\hbar}{\cal A} \,
\frac{dS_\mathrm{gen}}{d\lambda}\right|_{y_1}~.
\label{eq-qexpdef}
\end{equation}
In other words, the quantum expansion is the rate of change, per unit area, of the generalized entropy under deformations of $\sigma$ along an orthogonal light-ray. For further details, and an equivalent definition in terms of a functional derivative, see Ref.~\cite{BouFis15a}.\end{defn}

\subsection{Quantum Marginal Surfaces and Q-Screens}

We now require in addition that $\sigma$ be compact and connected. The following definitions follow~\cite{Wal10QST}; they reduce to more familiar classical definitions under the substitution $\Theta\to\theta$. 

\begin{defn} Let $\sigma$ be a compact, connected surface that splits a Cauchy surface into two portions. If one of its orthogonal null congruences, say in the $k^a$ direction, has vanishing quantum expansion everywhere on $\sigma$, we call $\sigma$ a {\em quantum marginal surface}. \end{defn}

\begin{defn} A {\em Q-screen}\footnote{We thank Z.~Fisher for suggesting this term.} $H$ is a smooth hypersurface admitting a foliation by quantum marginal surfaces called {\em leaves}.  \end{defn}

The foliation structure implies that we can think of any screen $H$ as a one-parameter family of marginal surfaces $\sigma(r)$, with the (nonunique) parameter $r$ taking values in an open interval. Moreover, this defines a nowhere vanishing vector field $h^a$ on $H$, which is tangent to $H$ and normal to its leaves. For a given choice of foliation parameter, the normalization of $h$ can be fixed by choosing $h(r) = h^a (dr)_a = 1$, and $h$ can be uniquely decomposed into the null normals $k$ and $l$:
\begin{equation}
h^a = \alpha l^a + \beta k^a~.
\label{eq-hlk}
\end{equation}

It will be convenient to impose a number of weak technical conditions on $H$:
\begin{defn} \label{def-technical}
A Q-screen $H$ is {\em regular} if
\begin{enumerate}[(a)] 
\item \label{def-technical1} the {\em quantum generic condition} is met: for any leaf $\sigma$, the quantum expansion $\Theta_k$ at the null geodesic intersecting $\sigma$ at $y_1$ does not continue to vanish when $\sigma$ is infinitesimally deformed along the null generator emanating from $y_2$ along the $k^a$ direction, for any $y_2\in\sigma$ (including $y_2=y_1$).
\item \label{def-technical2} the {\em second generic condition} holds: let $H_+$, $H_-$, $H_0$ be the set of points in $H$ with, respectively, $\alpha>0$, $\alpha<0$, and $\alpha=0$. Then $H_0= \dot H_- = \dot H_+$.
\item \label{def-technical3} every inextendible portion $H_i\subset H$ with definite sign of $\alpha$ either contains a complete leaf, or is entirely timelike.
\end{enumerate} 
\end{defn}

Thus, a regular Q-screen contains at least one complete leaf with definite sign of $\alpha$. By shifting $r$  we can take this leaf to be at $r=0$. Moreover, the second generic condition implies that if a screen contains any point $p$ with $\alpha=0$, then an open neighborhood of $p$ contains points with both $\alpha>0$ and $\alpha<0$. Note that this is indeed generic.

\subsection{Past and Future Q-Screens}

Given a quantum marginal surface $\sigma$, we now consider the quantum expansion $\Theta_l$ in the opposite null direction. For example, if $\sigma$ is quantum marginal in the future-outgoing direction, $\Theta_k=0$, we consider the future-ingoing light-rays orthogonal to $\sigma$, with quantum expansion $\Theta_l$. In general $\Theta_l$ need not have uniform sign everywhere on $\sigma$. 

\begin{defn} If $\Theta_l<0$ $(\Theta_l>0)$ everywhere on the quantum marginal surface $\sigma$, we call $\sigma$ {\em marginally quantum (anti)trapped}. \end{defn}

\begin{defn} A {\em future Q-screen} $H$ is a smooth hypersurface admitting a foliation by marginally quantum trapped surfaces called {\em leaves}.  Similarly, a {\em past Q-screen} is a smooth hypersurface foliated by marginally quantum antitrapped leaves. \end{defn}

Recall that $\alpha$ has definite sign on the leaf $\sigma(0)$ of a regular Q-screen, by our earlier convention. For a future (past) Q-screen we shall use the additional convention that $\alpha<0$ ($\alpha>0$) on $\sigma(0)$, which can be implemented by setting $r\to -r$ as needed.

\subsection{New Generalized Second Law}
\label{sec-newgsl}

\begin{conj} Let $H$ be a regular past or future Q-screen, with foliation $\sigma(r)$. Then the generalized entropy $S_\mathrm{gen}(r) \equiv S_\mathrm{gen}[\sigma(r)]$ strictly increases along the foliation:
\begin{equation}
\frac{d S_\mathrm{gen}}{dr} >0~.
\label{eq-gsl}
\end{equation}
\label{conj}
\end{conj}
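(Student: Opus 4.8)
The plan is to establish the conjecture by the same route used to prove the classical area theorem for holographic screens in~\cite{BouEng15a,BouEng15b}, with three substitutions: the area $A$ is replaced by $S_\mathrm{gen}$, the classical expansion $\theta$ by the quantum expansion $\Theta$, and---crucially---the combination of Raychaudhuri's equation with the NEC is replaced by the QFC. The argument has a local (first-variation) ingredient and a global (causal-structure) ingredient.

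For the local ingredient, I would compute the first variation of $S_\mathrm{gen}$ along the foliation. Deforming the leaf $\sigma(r)$ to $\sigma(r+dr)$ displaces each point by $h^a\,dr = (\alpha l^a+\beta k^a)\,dr$. Since the quantum expansion is by construction the functional derivative of $S_\mathrm{gen}$ per unit area along a null direction (Eq.~(\ref{eq-qexpdef})), and since quantum marginality forces $\Theta_k\equiv 0$ on every leaf, this gives
\begin{equation}
\frac{dS_\mathrm{gen}}{dr} = \frac{1}{4G\hbar}\int_{\sigma(r)}\alpha\,\Theta_l\,dA~.
\label{eq-firstvar}
\end{equation}
For a past Q-screen $\Theta_l>0$ everywhere on each leaf, and by our convention $\alpha>0$ on the distinguished complete leaf $\sigma(0)$; hence Eq.~(\ref{eq-gsl}) holds immediately at $r=0$ and, more generally, on the entire inextendible portion of $H$ through $\sigma(0)$, where $\alpha$ retains a definite positive sign. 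Future Q-screens are handled by the same computation with the signs reversed. The content of the theorem thus reduces to propagating this monotonicity across the remaining structure of $H$: its timelike portions, and the loci where $\alpha$ changes sign.

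For the global ingredient I would use the QFC exactly where the classical proof uses focusing. Starting from $\Theta_k=0$ on a leaf $\sigma(r)$, the QFC makes $\Theta_k$ nonincreasing along the null hypersurface $N(r)$ generated orthogonally to $\sigma(r)$ in the marginal direction, and the quantum generic condition (part (a) of Def.~\ref{def-technical}) upgrades this to $\Theta_k<0$ strictly away from $\sigma(r)$; integrating Eq.~(\ref{eq-qexpdef}) then shows that every other cross-section of $N(r)$ has strictly smaller generalized entropy. This is the quantum analogue of the statement that a classical marginal surface uniquely maximizes area on its light-sheet, and it is the precise sense in which a Q-screen ``comes from a null foliation''. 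With this maximization property in hand, I would follow~\cite{BouEng15a}: split $H$ into inextendible portions of definite $\alpha$-sign (which, by the second generic condition, abut only along the $\alpha=0$ loci, each of which is a limit of portions of both signs), read off the causal position of a leaf $\sigma(r_2)$ relative to $N(r_1)$ for $r_1<r_2$ from the sign of $\alpha$ and from part (c) of Def.~\ref{def-technical} (which tells us each definite-sign portion is either spacelike with a complete leaf or entirely timelike), and compare $S_\mathrm{gen}[\sigma(r_1)]$ with $S_\mathrm{gen}[\sigma(r_2)]$ by flowing along the null generators of the appropriate $N$ and invoking the maximization. Roughly, a leaf lying on another leaf's null hypersurface $N$ would contradict uniqueness of the maximizer, which is how the problematic $\alpha$-configurations are excluded; assembling the local comparisons, anchored at $\sigma(0)$, then yields global strict monotonicity.

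The hard part is this global ingredient, and three features make it harder than its classical counterpart. First, the null hypersurfaces $N(r)$ must be controlled---shown to remain smooth (caustic-free) and to sweep out the region separating neighboring leaves, so that the ``flow and compare'' step is well defined; classically this follows from focusing and genericity, and the analogous control must be extracted from the QFC and the generic conditions. Second, unlike the area, $S_\mathrm{gen}$ is nonlocal: it depends on the quantum state of the whole region $\Sigma_\mathrm{out}$, not just on data intrinsic to $\sigma$. Every comparison of generalized entropies between surfaces on different Cauchy slices must therefore be accompanied by a check that the associated ``outside'' regions are nested in the way the screen's orientation dictates, and the whole argument has to be organized so that $\Sigma_\mathrm{out}$ varies monotonically along the foliation. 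Third, the classical proof leans on the pointwise, per-generator Raychaudhuri inequality, whereas the QFC is an inequality for a nonlocal second functional derivative; the reasoning must be recast in integrated form throughout, and strictness must be tracked with care so that the quantum generic condition delivers the strict inequality in Eq.~(\ref{eq-gsl}) rather than only $dS_\mathrm{gen}/dr\geq0$. Granting these points---and the standing assumption that $S_\mathrm{gen}$ is UV-finite, reviewed in the Appendix of~\cite{BouFis15a}---the QFC should imply Conjecture~\ref{conj} by essentially the logic through which the NEC implies the classical area law.
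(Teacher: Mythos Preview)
Your proposal is correct and follows essentially the same route as the paper: reduce the conjecture to the statement that $\alpha$ has a definite sign on $H$ (the global step, Theorem~\ref{thm:structure}), and then read off monotonicity of $S_\mathrm{gen}$ from the first variation you wrote down, which is exactly the infinitesimal form of the paper's ``zig-zag'' decomposition (Fig.~\ref{fig-zigzag}).

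One point of comparison is worth recording. In your sketch of the global step you frame the contradiction as ``a leaf lying on another leaf's null hypersurface $N$ would violate uniqueness of the maximizer,'' and you list caustic control of $N(r)$ and monotone nesting of $\Sigma_\mathrm{out}$ as the main obstacles. The paper's actual mechanism is slightly different and sidesteps those concerns: one works with the causal sets $K^\pm(\sigma)$ (which are well defined without any smoothness control on $N$), finds a leaf $\sigma(R)$ that is merely \emph{tangent} to $N(1+\epsilon)$ at a point $Q$ rather than contained in it, and then invokes Wall's comparison result (Lemma~\ref{lem:aron}, from~\cite{Wal10QST}) to compare quantum expansions at the tangency. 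It is this lemma---absorbing the nonlocality of $S_\mathrm{gen}$ into a pointwise inequality between $\Theta$'s of tangent surfaces---together with the QFC and the quantum generic condition that produces $\Theta_k[\sigma(R);y]>0$ and the contradiction. So the substitution $\theta\to\Theta$, NEC$\to$QFC that you describe is accompanied by one more substitution at the technical core: the classical touching-surfaces expansion comparison is replaced by Wall's quantum version, and once that is in hand the difficulties you anticipated largely evaporate.
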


In fact, we conjecture more strongly that the following geometric properties are obeyed by any regular Q-screen (which need not be past or future):
\begin{itemize} 
\item $\alpha$ cannot change sign anywhere on $H$.
\item The null hypersurface generated by the $k^a$ congruence orthogonal to any leaf $\sigma$ intersects $H$ only on $\sigma$.
\end{itemize} 
As we shall see in Sec.~\ref{sec-proof} (see Remark~\ref{final} and Corollary~\ref{coro}), these two statements are equivalent, and each implies the GSL for past and future Q-screens, Conjecture~\ref{conj}. 

\section{Examples}
\label{sec-examples}

In this section, we consider two examples: one future Q-screen and one past Q-screen. We verify explicitly that they satisfy the new GSL we have proposed.

\subsection{Evaporating Black Hole}
\label{sec-bh}

Consider a Schwarzschild black hole formed by the collapse of a spherically symmetric dust cloud. Fig.~\ref{fig-fqhs} shows the resulting geometry, both with and without Hawking radiation taken into account.\footnote{In the case without evaporation, we take a small, exponentially descreasing density of dust to fall in at all times so as to satisfy the classical generic condition of~\cite{BouEng15a}. In the evaporating case, the quantum generic condition can be satisfied simply by not including any infalling matter at late times.}
It is instructive to study first the classical holographic screen, in {\em both} of these spacetimes. We will then turn to the Q-screen and verify the new GSL. 

\begin{figure*}[ht]
\subfigure[ ]{
\includegraphics[height=0.4 \textwidth]{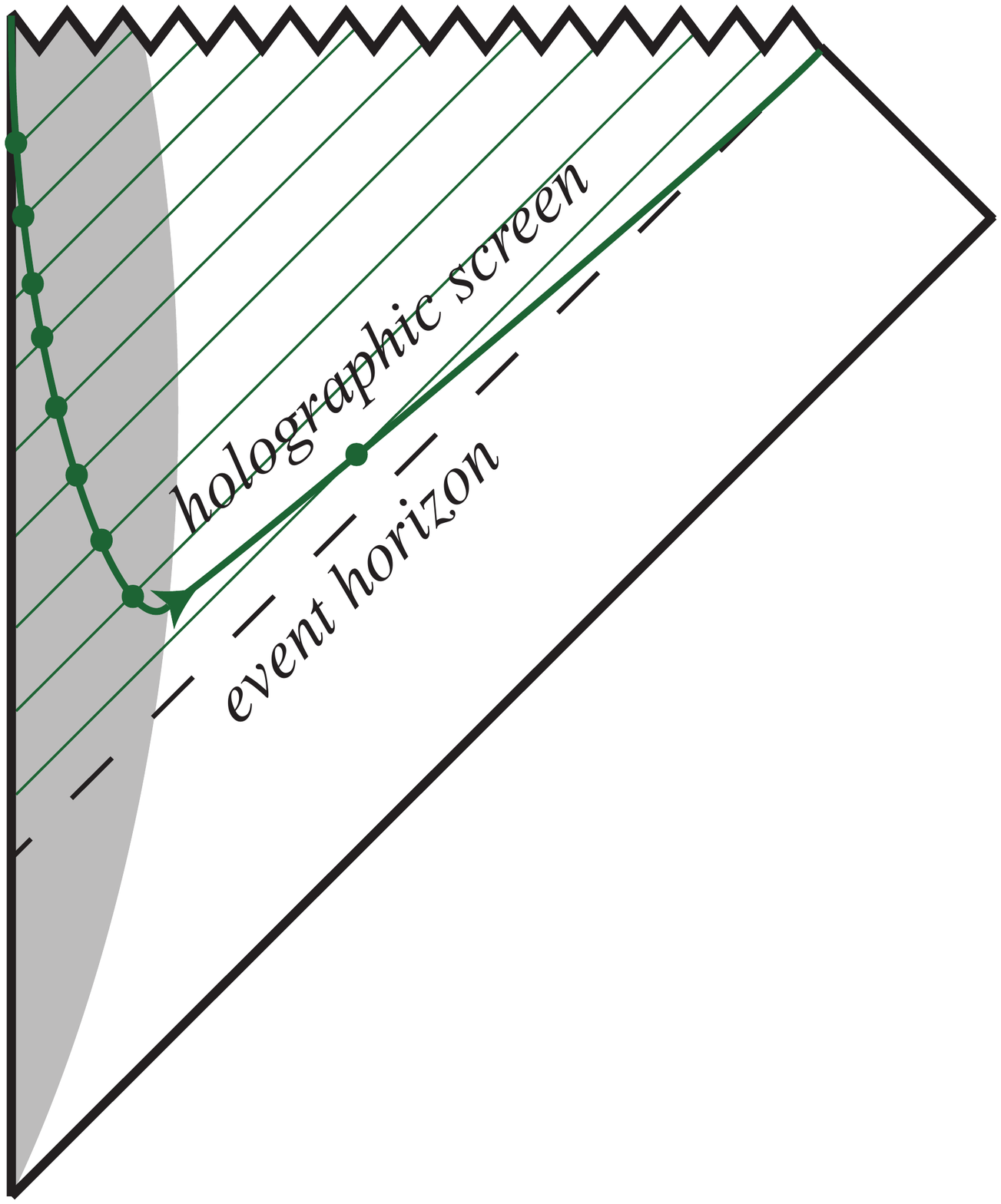}
\label{fig-fqhs-a}}
\subfigure[]{
\includegraphics[height=0.45 \textwidth]{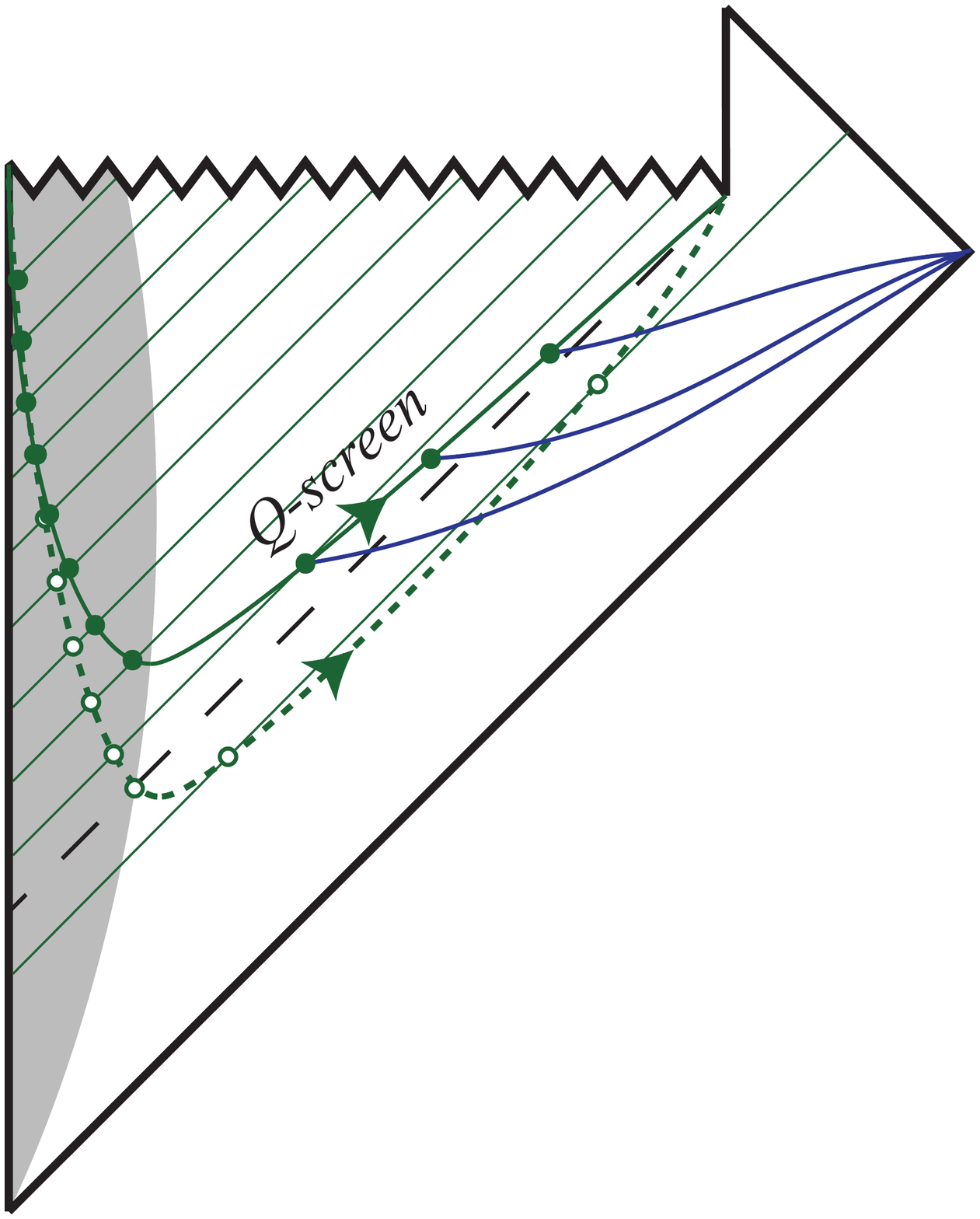}
\label{fig-fqhs-b}}
\caption{Black hole formed by dust collapse. The thin green lines are future light cones which form a null foliation of the spacetime. (a) No Hawking radiation. A dot indicates the marginal surface on each light cone. The area of the classical holographic screen increases towards the exterior and past (arrow). (b) Hawking radiation included. A solid (hollow) dot marks the quantum marginal (marginal) surface(s) on each light cone. The classical screen (short dashed) now lies outside the event horizon (long dashed) during evaporation. A future Q-screen lies inside the black hole. Its area decreases during evaporation. But due to the production of Hawking radiation, the generalized entropy outside the Q-screen increases monotonically, as demanded by our conjecture.}
\label{fig-fqhs}
\end{figure*}

Let $\rho$ be the radial variable usually called $r$ (reserved here for the screen parameter), such that a sphere with coordinate radius $\rho$ has proper area $4\pi\rho^2$. The event horizon is at $\rho=R$; in the evaporating case, $R$ is time-dependent.

\paragraph*{Classical Holographic Screen} The classical holographic screen is constructed by finding spheres of stationary area ($\theta_k=0$) on each of a sequence of future light cones centered at $\rho=0$ (see Fig.~\ref{fig-fqhs-a}). In a classical black hole geometry without evaporation, the screen is contained entirely inside the black hole, because outside the black hole the area of outgoing future light cones grows without bound. (More generally, this follows because the first generic condition of~\cite{BouEng15b} guarantees the existence of a trapped sphere near the sphere of maximal area; and this implies a singularity further along the light cone by Penrose's theorem~\cite{Pen65}.) 

This is a future holographic screen: the expansion in the non-marginal direction is strictly negative everywhere on each of its leaves. Our classical results~\cite{BouEng15b} imply that the screen evolves everywhere to its own past or exterior and that the area grows monotonically under this evolution. In Fig.~\ref{fig-fqhs-a} one can verify this behavior.

Now consider a different geometry that includes backreaction from the Hawking radiation, shown in Fig.~\ref{fig-fqhs-b}. The event horizon grows during the collapse and then shrinks during evaporation. By continuity, sufficiently nearby future light cones just inside or outside the black hole, too, will have a surface that locally maximizes the area. Therefore, the classical holographic screen will extend outside of the event horizon. Now consider a future light cone just barely outside the black hole horizon. Its area grows until it is focused by the collapsing matter; then it shrinks along with the event horizon during a phase when they are formally less than one Planck distance apart. But any light cone that lies outside the horizon will get out to future null infinity, where the area diverges. Therefore the light cone area must have a local minimum during evaporation; this happens when $\rho$ satisfies
\begin{equation}
R(\rho-R)\sim O(l_P^2)~. 
\label{eq-stretch}
\end{equation} 
This is the coordinate radius at which the area of the outgoing light cone would classically increase by about one Planck area per Schwarzschild time, compensating the effect of evaporation. In a typical infalling observer's reference frame, the sphere satisfying Eq.~(\ref{eq-stretch}) has a proper distance of order $l_P$ from the event horizon; thus, the classical screen coincides with the ``stretched horizon''~\cite{SusTho93b} during the evaporation phase.

Hence, each of these barely-exterior light cones contributes two leaves to the classical holographic screen. The behavior of the area in the evaporating phase can be understood as follows. A black hole emits $O(1)$ quanta of energy $T_H\sim \hbar/R$ per Schwarzschild time $R$. This decreases the black hole mass by $O(T_H)$, so the event horizon radius decreases by $O(GT_H)\sim O(l_P^2/R)$. Thus, the area of the event horizon decreases by about one Planck area in every Schwarzschild time $R$. This implies that the area of the minimum sphere on the future light cones just outside the black hole decreases as well. But these are the leaves of the classical holographic screen decreases during the evaporation phase. 

We conclude that the area of the classical holographic screen increases during the collapse phase and decreases during evaporation, when it evolves back to its own future and interior. Though it is a future holographic screen, it does not satisfy our area theorem. This is as expected, much as the event horizon fails to satisfy Hawking's area theorem in this setting, since the NEC is violated.

\paragraph*{Q-Screen} Bekenstein's GSL improves on Hawking's area theorem for event horizons. The matter entropy produced outside the black hole is larger, by a factor $O(1)>1$, than the loss of Bekenstein-Hawking entropy due to the decrease in event horizon area~\cite{Pag76,Pag13}: 
\begin{equation}
\frac{d S_\mathrm{out}}{(-dA/4G\hbar)} -1 \sim O(1) > 0~. 
\label{eq-page}
\end{equation}
Therefore 
\begin{equation}
d S_\mathrm{gen} \equiv \frac{dA}{4G\hbar} + dS_\mathrm{out}  > 0
\label{eq-page2}
\end{equation}
during evaporation, and Bekenstein's GSL is satisfied.

Similarly, we conjectured a quantum improvement of our area theorem: the GSL for Q-screens, Eq.~(\ref{eq-gsl}). 
We will now verify that the conjecture is satisfied in the example of the evaporating black hole. For definiteness, we will chose $\Sigma_\mathrm{out}$ to be the exterior, i.e., the side with the asymptotic boundary. The analysis is unchanged with the opposite choice.

To construct the Q-screen, we again consider outgoing future light cones centered at $\rho=0$, but now we must maximize the generalized entropy along each cone. Outside of the black hole no future cone contains such a maximum. This is obvious for light cones far from the black hole, whose area increases rapidly. Sufficiently close to the event horizon, light cones will decrease in area while they remain less than a Planck distance from the horizon, as discussed above. In this regime the difference with the event horizon is negligible, and it follows from Eq.~(\ref{eq-page2}) that the generalized entropy increases despite the area decrease. Hence the entire Q-screen lies inside the black hole. (More generally, this property follows from Wall's Quantum Singularity Theorem for quantum trapped surfaces~\cite{Wal10QST}.)

On the other hand, every future light cone inside the black hole contains a marginally quantum trapped sphere. On this sphere, the classical decrease in area in the future-outward direction precisely compensates the production of Hawking radiation entropy. By Eq.~(\ref{eq-page}), this implies that at the marginally quantum trapped sphere, the light cone's area must shrink faster than the event horizon (since the latter's decrease does not fully compensate the radiation entropy)\footnote{More precisely, the light cone area must decrease by $O(l_P^2)$ as $v\to v+R$, where $v=t + \rho + R\log\left|\frac{\rho}{R}-1\right|$.}. By the same reasoning that led to Eq.~(\ref{eq-stretch}), this will occur where $\rho$ satisfies
\begin{equation}
R(\rho-R)\sim -O(l_P^2)~,
\label{eq-shrink}
\end{equation} 
corresponding to a proper distance of order $l_P$ inside the event horizon, as measured by an infalling observer.

In summary, we find that the Q-screen is like a ``shrunk horizon'': during the evaporation phase, it hovers a Planck distance inside the event horizon. Thus, the Q-screen partakes in the event horizon's decrease; its area is always of order a Planck area smaller than the event horizon. It is now clear that our GSL is obeyed, for the same reason that Bekenstein's GSL is obeyed: we know from Eq.~(\ref{eq-page2}) that the area decrease along the Q-screen is more than compensated by the production of Hawking radiation entropy in the black hole's exterior.

\subsection{Cosmology}
\label{sec-cosmo}

Consider a flat, expanding Friedmann-Robertson-Walker cosmology filled with radiation. The metric is
\begin{equation}
ds^2 = -dt^2 + a(t)^2 [d\chi^2 + \chi^2 d\Omega^2]~,
\end{equation}
with $a(t) = (t/l_P)^{1/2}$. Again we will construct both a classical screen and a Q-screen. We will find that both are past screens, and we will verify our area theorem and our new GSL.
\begin{figure}[ht]
\includegraphics[height=0.40 \textwidth]{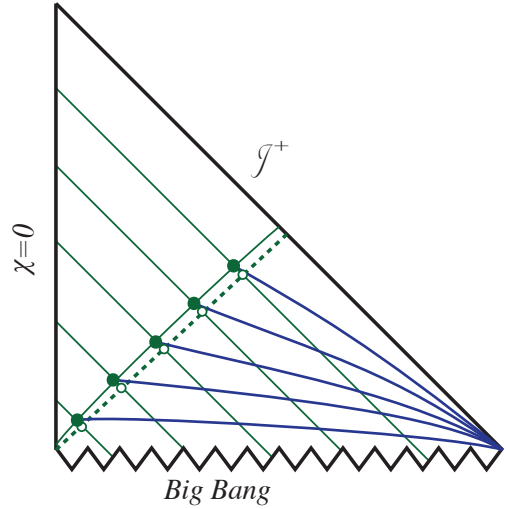}
\caption{Radiation dominated expanding universe; dots and lines as in Fig.~2b. The classical and Q-screen nearly coincide; the area and generalized entropy both grow monotonically to the future.}
\label{fig-pqhs}
\end{figure}

We begin by picking a simple null foliation of the spacetime: the past light cones of an ``observer'' at $\chi=0$, see Fig.~\ref{fig-pqhs}. A classical screen is constructed by maximizing the area on each cone and combining the corresponding spheres into a hypersurface.
Consider a past light cone with tip at the time corresponding to scale factor $a_0$. The area of the sphere at scale factor $a$ is
\begin{equation}
A(a) = 16\pi a^2(a_0-a)^2 l_P^2~.
\end{equation}
This is maximal at $a=a_\mathrm{PHS} = a_0/2$ or $t_\mathrm{PHS} = t_0/4$. Thus we find that the area of the classical screen increases monotonically towards the future,
\begin{equation}
\frac{dA_\mathrm{PHS}}{d t_\mathrm{PHS}} = 32\pi t_\mathrm{PHS} >0 ~,
\end{equation}
as guaranteed by our area theorem~\cite{BouEng15a}. 

To construct the quantum screen, we use the same null foliation, but now we maximize the generalized entropy on each past light cone:
\begin{equation}
d\left[A+ 4G\hbar\, S_\mathrm{out}\right]=0
\label{eq-balance}
\end{equation}
Assuming the number of massless species to be of order unity, the entropy per comoving volume is $s\sim l_P^{-3}$, and we have
\begin{equation}
\frac{d S_\mathrm{out} }{dr} = 4\pi \chi^2\, s
\end{equation}
If we choose $\Sigma_\mathrm{out}$ to be the exterior\footnote{The exterior entropy diverges since the volume is infinite. It can be regulated by taking the edge of $\Sigma_\mathrm{out}$ to lie near the big bang at some large but finite comoving radius. The edge is held fixed as $\sigma$ is varied.} (interior) of the past light cone, $S_\mathrm{out}$ will increase (decrease) monotonically as we move to the future on the light cone. By Eq.~(\ref{eq-balance}), this implies that the maximum of the generalized entropy will not be exactly in the same place as the maximum of the area on the same light cone. Instead, it will be shifted slightly inward and to the future (outward and to the past). The shift is of order the geometric mean of the age of the universe and the Planck time:
\begin{equation}
|t_\mathrm{QS} - t_\mathrm{PHS}| \sim O(\sqrt{t_\mathrm{PHS} l_P})~.
\end{equation}
The behavior of generalized entropy along the Q-screen is dominated by the classical growth of the area. We find
\begin{equation}
l_{\rm P}^2~ \frac{d S_\mathrm{gen,QS}}{d t_\mathrm{QS}} = 32\pi t_\mathrm{QS} \pm O(\sqrt{t_\mathrm{QS} l_P}) > 0~,
\end{equation}
consistent with our conjectured GSL in the semiclassical regime, $t_\mathrm{QS}\gg l_P$.

\section{Proof from the Quantum Focussing Conjecture}
\label{sec-proof}

In this section, we show that the Generalized Second Law for Q-screens follows from the Quantum Focussing Conjecture (QFC)~\cite{BouFis15a}. 

Why derive one conjecture from another? The first reason is that it is useful to understand the logical structure of a set of plausible and interesting conjectures. Our result establishes that the QFC is at least as strong as the new GSL. However, the new GSL is not obvious from the QFC: the implication requires a nontrivial proof.

Secondly, in light of the proof below, any evidence that makes the QFC more plausible can be regarded in particular as evidence for the new GSL. Indeed, there is considerable evidence for the QFC: it implies several nontrivial related statements which have already been proven or extensively tested. In the classical limit of the geometry and the stress tensor, the QFC implies the classical focussing property of General Relativity. For null hyperplanes in Minkowski space, the QFC implies a novel lower bound on the quantum stress tensor in terms of the second derivative of the exterior entropy. This ``Quantum Null Energy Condition'' was recently proven~\cite{BouFis15b}. Finally, the QFC also implies the Bousso bound~\cite{CEB1} on the entropy crossing a lightsheet~\cite{BouFis15a}. No counterexample to this bound is known, and the bound has been proven in certain hydrodynamic regimes~\cite{FMW,BouFla03}.

\subsection{Quantum Focussing Conjecture}
\label{sec-qfc}

The Quantum Focussing Conjecture (QFC) states that the quantum expansion cannot increase along any null congruence~\cite{BouFis15a}. More precisely,
\begin{equation}
\frac{\delta}{\delta V(y_2)} \Theta_k[V(y); y_1] \le 0~.
\label{eq-sqfc}
\end{equation}
The quantum expansion $\Theta$ is defined as in Sec.~\ref{sec-conjecture}, except that we characterize the surface $\sigma$ that appears in Eq.~(\ref{eq-qexpdef}) in terms of its affine position $V(y)$ on some null hypersurface $N$ generated by a congruence of null geodesics $y$, with tangent vector $k^a$. Thus, the QFC states that the quantum expansion cannot increase at $y_1$, if $\sigma$ is infinitesimally deformed along the generator $y_2$ of $N$, in the $k^a$ direction. Here $y_2$ can be taken to be either the same or different from $y_1$. 

Suppose that the quantum expansion in the orthogonal null direction $k^a$ is nonpositive (negative) somewhere on $\sigma$, i.e., suppose that $\Theta_k[V(y);y_1]\leq ~0$ for some $y_1$. Then $\Theta(\nu;y_1)$ will remain nonpositive at the null geodesic $y_1$, under forward evolution of the surface in the $k^a$ direction. More precisely, for two slices of $N$ satisfying $V'(y)\geq V(y)$ (for all $y$), the QFC implies that
\begin{equation}
\Theta[V(y), y_1] \le 0~,~ V(y)\geq 0 \implies \Theta[V(y),y_1] \le 0~,
\label{eq-wqfc}
\end{equation}
where if the first inequality is strict, then so is the second (unless $V'$ and $V$ coincide for all $y$).

In particular, if the expansion is negative everywhere on some surface $\sigma$, then it cannot vanish on any cross-section of $N$ that lies entirely in the $k^a$ direction away from $\sigma$. This will be the specific consequence of the QFC that enters the proof below. The statement is analogous to the classical result in General Relativity, that if light-rays are converging in a spacetime satisfying the NEC, then they cannot begin to diverge at any regular point of the congruence. 

\subsection{Derivation of the New Generalized Second Law}
\label{sec-derivation}

The derivation of the new GSL from the QFC will be closely analogous to our proof of the area law for holographic screens from the Null Energy Condition. Roughly, we will replace the classical expansion $\theta$ with the quantum expansion $\Theta$, and the assumption of the Null Energy Condition with the assumption of the QFC. 
 
We begin by recalling an important set of definitions and results from Ref.~\cite{BouEng15b}, which are purely geometric and carry over unchanged. Any Cauchy-splitting surface $\sigma$ defines a partition of the spacetime into sets $K^\pm(\sigma)$ whose shared boundary is a null hypersurface $N(\sigma)$ orthogonal to $\sigma$; see Fig.~\ref{fig-sets}. 
\begin{figure*}[ht]
\subfigure[ ]{
\includegraphics[height=0.28 \textwidth]{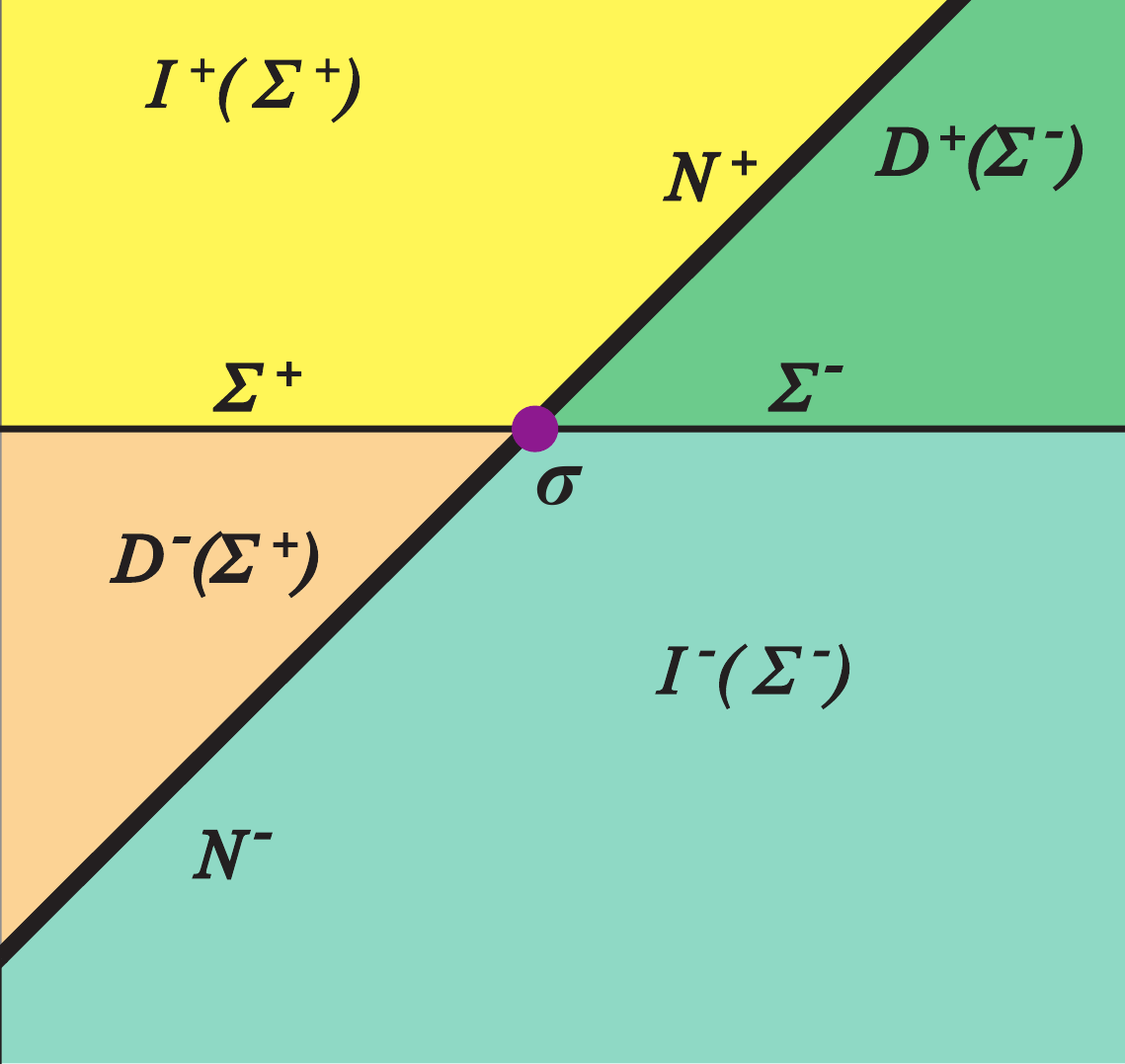}
\label{fig-sets-a}}
\subfigure[]{
\includegraphics[height=0.28\textwidth]{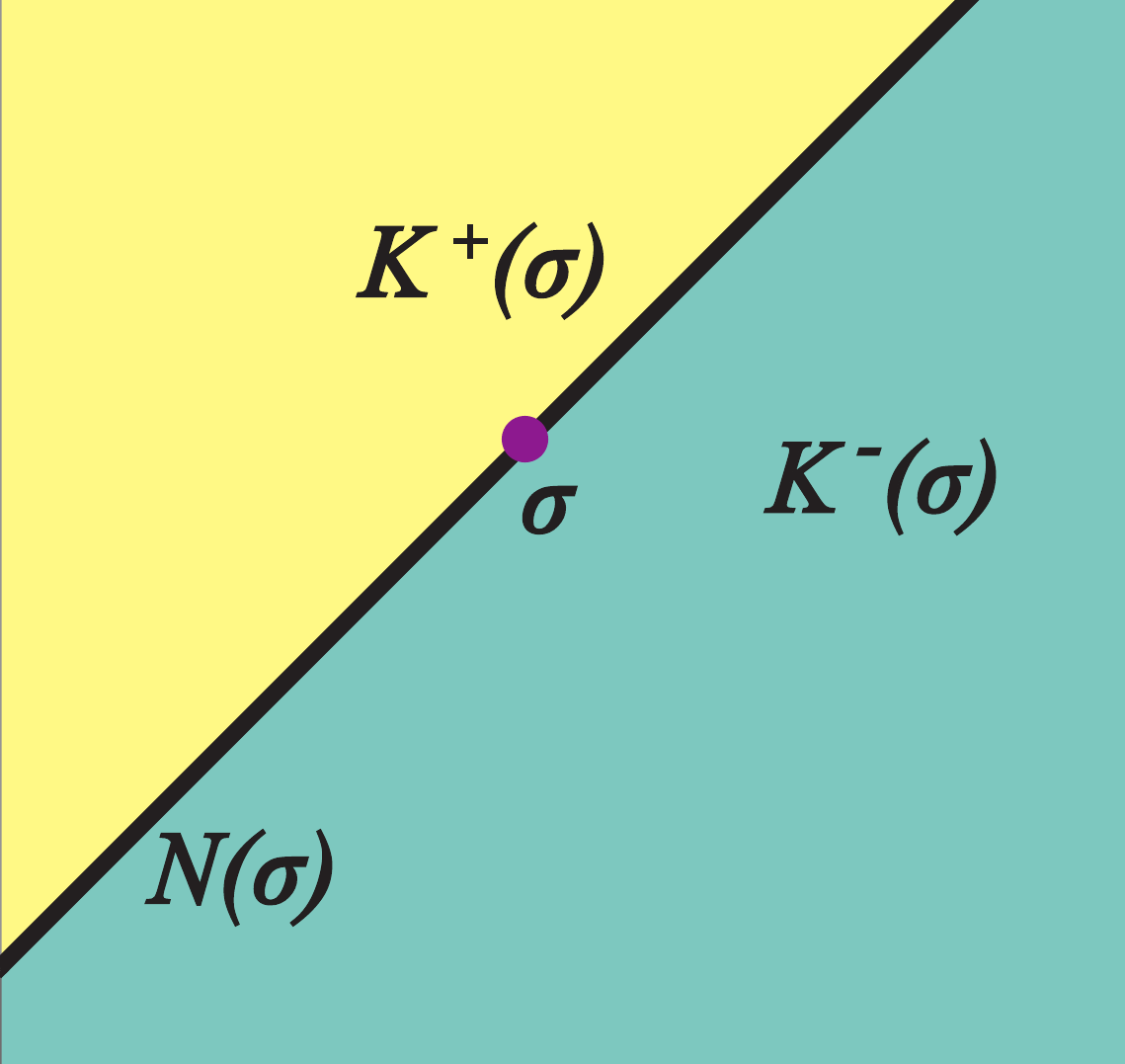}
\label{fig-sets-b}}
\subfigure[]{
\includegraphics[height=0.28\textwidth]{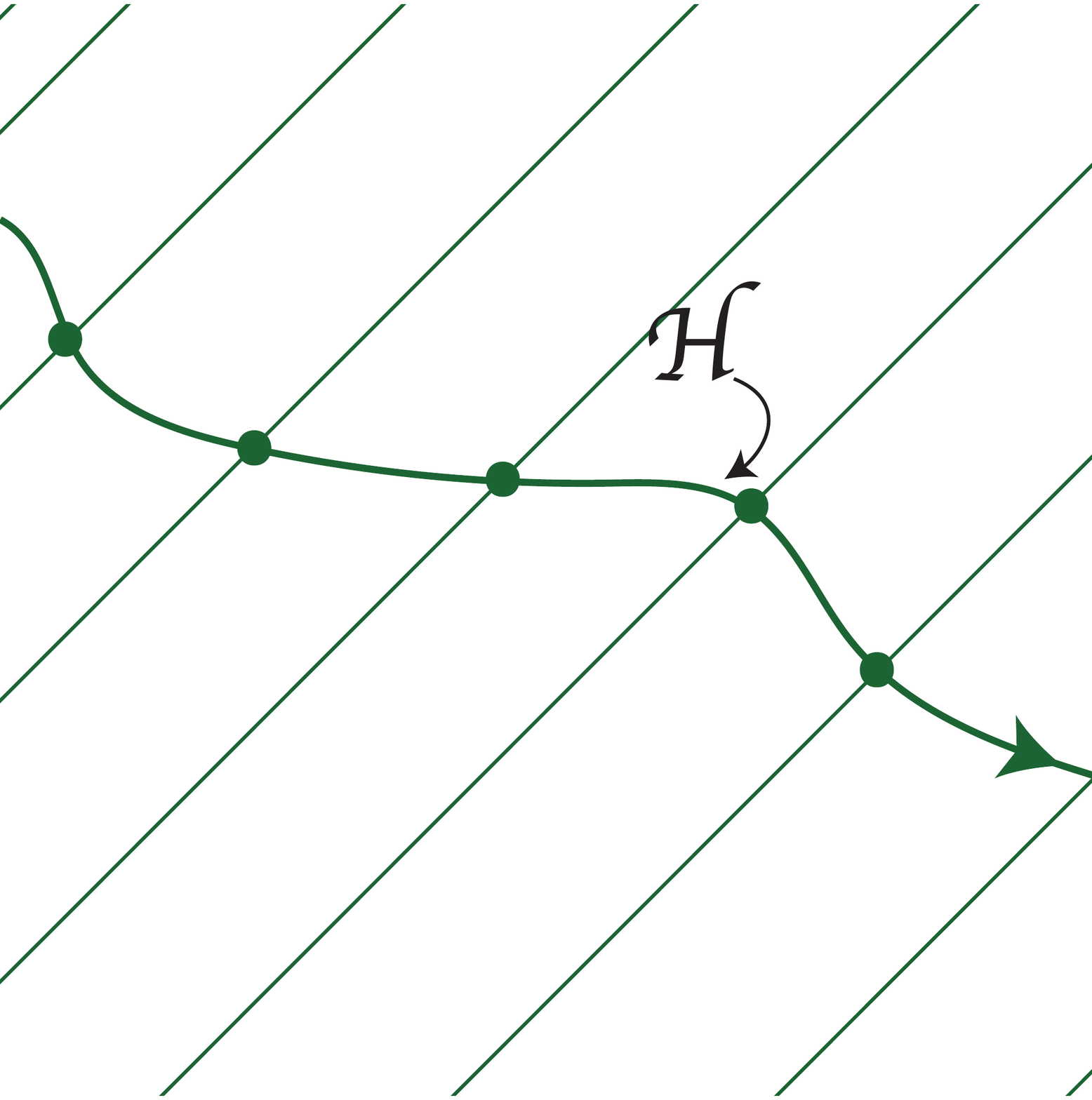}
\label{fig-sets-c}}
\caption{(a) A surface $\sigma$ that splits a Cauchy surface defines a partition of the entire spacetime into four regions, given by the past or future domains of dependence and the chronological future or past of the two partial Cauchy surfaces $\Sigma^\pm$. (b) The pairwise unions $K^\pm$ depend only on $\sigma$, not on the choice of Cauchy surface. $K^\pm$ share a boundary $N = N^+\cup N^- \cup \sigma$ generated by light-rays orthogonal to $\sigma$. (c) If a hypersurface $\cal H$ foliated by $\sigma(r)$ has $\alpha<0$ everywhere (see text for definition), then the sets $K^+(r)$ are monotonic under inclusion, and the sets $N(r)$ define a null foliation of spacetime.}
\label{fig-sets}
\end{figure*}
Now consider any hypersurface $\cal H$ foliated by surfaces $\sigma(r)$, with tangent vector field $h^a = \alpha l^a + \beta k^a$ normal to the foliation, as defined in Eq.~(\ref{eq-hlk}). Note that the surfaces $\sigma(r)$ need not be marginal; we use the notation $\cal H$ rather than $H$ for a hypersurface of this more general type. In~\cite{BouEng15b} we proved that if $\alpha$ has definite sign on $\cal H$, then the sets $K^\pm(r)\equiv K^\pm(\sigma(r))$ are monotonic under inclusion:

\begin{lem} Let $r_1<r_2$. If $\alpha<0$ everywhere on ${\cal H}$, then $\bar K^+(r_1)\subset K^+(r_2)$ and $K^-(r_1)\supset \bar K^-(r_2)$, where an overbar denotes closure. If $\alpha>0$ everywhere on $\cal H$, then $\bar K^+(r_2)\subset K^+(r_1)$ and $K^-(r_2)\supset \bar K^-(r_1)$.
\label{lem-kmono}
\end{lem}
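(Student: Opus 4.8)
\emph{Strategy and reductions.} The plan is a local-to-global argument. It suffices to treat $\alpha<0$ and to prove $\bar K^+(r_1)\subset K^+(r_2)$: the companion inclusion $\bar K^-(r_2)\subset K^-(r_1)$ follows by the same reasoning (or from $K^-(\sigma)=\overline{M\setminus K^+(\sigma)}$ together with regularity of the $K^\pm(\sigma)$ as closed sets sharing the achronal boundary $N(\sigma)$), and the $\alpha>0$ case is recovered by the relabeling $r\to -r$. The one nontrivial ingredient I will take from \cite{BouEng15b} (it is purely causal and carries over unchanged) is a \emph{monotonicity lemma} for the regions: if $\tau,\tau'$ are homologous Cauchy-splitting surfaces with $\tau'\subset\overline{K^+(\tau)}$, then $\overline{K^+(\tau')}\subseteq K^+(\tau)$ --- displacing the splitting surface toward the $K^+$ side can only shrink $K^+$. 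This is proved by comparing the domains of dependence and chronological futures of the partial Cauchy surfaces cut off by $\tau$ and $\tau'$, using achronality of $N$.

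\emph{Local step.} This is where $\alpha<0$ is used. Fix $r_0$ and take $\delta r>0$ small. Because $\mathcal H$ is smoothly foliated by the $\sigma(r)$ with normal $h^a=\alpha l^a+\beta k^a$ (normalized by $h(r)=1$), the leaf $\sigma(r_0+\delta r)$ is, to $O(\delta r^2)$, the image of $\sigma(r_0)$ under the displacement $p\mapsto \exp_p(\delta r\,h^a(p))$. I decompose this displacement. The $\beta k^a$ part slides each point along the null generator of $N(\sigma(r_0))$ through it, so the resulting surface $\tilde\sigma$ is a cross-section of the \emph{same} null congruence; hence $N(\tilde\sigma)=N(\sigma(r_0))$ as point sets and $K^\pm(\tilde\sigma)=K^\pm(\sigma(r_0))$. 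The remaining $\alpha l^a$ part moves $\tilde\sigma$ off $N(\sigma(r_0))$ \emph{transversally}, since $l^a\notin\mathrm{span}(T\sigma,k^a)$ (it has nonzero inner product with $k^a$); as $\alpha<0$ is bounded away from zero on the compact leaf, this pushes $\tilde\sigma$ a transverse distance $\gtrsim\delta r$ into $\mathrm{int}\,K^-(\sigma(r_0))$ at every point. (No degeneracy occurs where $\beta=0$: there $h^a=\alpha l^a$ is still transverse; the only obstruction, $\alpha=0$, is excluded by hypothesis.) Choosing a tubular neighborhood $U\supset\sigma(r_0)$ small enough that $N(\sigma(r_0))\cap U$ contains no caustics or self-intersections, and $\delta r$ small enough that $\sigma(r_0+\delta r)\subset U$, I conclude $\sigma(r_0+\delta r)\subset\mathrm{int}\,K^-(\sigma(r_0))$; running the same decomposition at $r_0+\delta r$ with displacement $-\delta r\,h^a$ gives the mirror statement $\sigma(r_0)\subset\mathrm{int}\,K^+(\sigma(r_0+\delta r))$. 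Feeding the latter into the monotonicity lemma (with $\tau=\sigma(r_0+\delta r)$, $\tau'=\sigma(r_0)$) yields $\overline{K^+(\sigma(r_0))}\subseteq K^+(\sigma(r_0+\delta r))$.

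\emph{Globalization.} The admissible $\delta r$ above depends continuously on $r_0$ (through $\min|\alpha|$ on the leaf, a local curvature bound, and the affine distance to the nearest caustic of $N(\sigma(r_0))$), so over the compact interval $[r_1,r_2]$ --- along which $\bigcup_r\sigma(r)$ is compact --- it is bounded below by some $\epsilon_0>0$. Partition $r_1=s_0<s_1<\dots<s_n=r_2$ with $s_{j+1}-s_j<\epsilon_0$ and chain the local inclusions, $\overline{K^+(r_1)}\subseteq K^+(s_1)\subseteq\overline{K^+(s_1)}\subseteq K^+(s_2)\subseteq\dots\subseteq K^+(r_2)$, which is the assertion.

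\emph{Main obstacle.} I expect the crux to be the local step --- specifically, the passage from the infinitesimal input (``$h^a$ has a definite-sign transverse component off $N(\sigma(r_0))$'') to a statement about the \emph{global} regions $K^\pm$. This rests on the causal monotonicity lemma, which is the substantive piece of causal-structure bookkeeping, and it must contend with the fact that $N(\sigma)$ is, away from $\sigma$, only a Lipschitz achronal hypersurface with caustics and self-intersections; this is precisely why one localizes to a caustic-free tube and leans on compactness of the leaves for uniformity, both pointwise along each leaf and, in the globalization step, over the parameter interval.
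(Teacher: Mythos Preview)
The paper does not supply a proof of this lemma at all: it is explicitly recalled as a purely geometric result established in Ref.~\cite{BouEng15b} (``In~\cite{BouEng15b} we proved that \ldots''), so there is nothing in the present paper to compare your argument against line by line.

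That said, your strategy is the right one and is essentially the method of \cite{BouEng15b}: decompose $h^a=\alpha l^a+\beta k^a$, observe that sliding along $k^a$ keeps one on the same null hypersurface $N(\sigma(r_0))$ and hence leaves $K^\pm$ unchanged, use the definite sign of $\alpha$ to push strictly to the appropriate side of $N$, and then chain local inclusions over $[r_1,r_2]$ by compactness of the leaves. Two small points of caution. First, the ``monotonicity lemma'' you import (displacing the splitting surface into $K^+$ shrinks $K^+$) is itself the nontrivial causal-structure statement here; in \cite{BouEng15b} it is not assumed but built up from more elementary facts about $I^\pm$ and $D^\pm$ of the partial Cauchy surfaces, so if you want a self-contained proof you should unpack that step rather than cite it. Second, your hypothesis for that lemma should read $\tau'\subset K^+(\tau)$ (open), not $\tau'\subset\overline{K^+(\tau)}$: with the closure version the conclusion $\overline{K^+(\tau')}\subseteq K^+(\tau)$ fails already for $\tau'=\tau$. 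You only ever use the interior version, so this is a statement-level slip, not a gap in the argument.
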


We will also need an important result due to Wall, which constrains the quantum expansion of a surface that touches but does not cross a null hypersurface $N$. Let $\chi$ be a spacelike surface tangent to $N$ at a point $p$. That is, we assume that one of the two future-directed null vectors orthogonal to $\chi$, $\kappa^a$, is also orthogonal to $N$ at $p$. We may normalize the (null) normal vector field to $N$ so that it coincides with $\kappa^a$ at $p$.  

We further assume that both $\chi$, and $\nu \equiv N\cap \Sigma$, split a Cauchy surface $\Sigma$. We pick an arbitrary side $\Sigma_\mathrm{out}(\chi)$ as the ``outside'' of $\chi$, and we choose $\Sigma_\mathrm{out}(\nu)$ to be the ``same'' side, in the sense that they agree at $p$. This defines generalized entropies $S_\mathrm{gen}[\chi]$, $S_\mathrm{gen}[\nu]$. We define the quantum expansions $\Theta[\chi;y]$, $\Theta[\nu;z]$ with respect to that null vector field orthogonal to each surface which coincides with $\kappa$ at $p$. 

Let $y$ be Gaussian normal coordinates about $p$ on $\chi$. Because $\chi$ and $\nu$ are tangent at $p$, they can be identified at linear order in the distance $\delta$ from $p$. Thus, in a sufficiently small neighborhood of $p$, we can use the same coordinates $z=y$ on $\nu$, up to an $O(\delta^2)$ ambiguity which will be irrelevant.

\begin{lem}  \label{lem:aron}
Let $\chi$ and $\nu$ be Cauchy-splitting surfaces tangent at a point $p$, and let $N\supset \nu$ be a null hypersurface, as described above.
\begin{itemize} 
\item If $\chi$ lies entirely outside the past of $N$, then any small open neighborhood of $p$ contains a point $y$ such that $\Theta[\chi; y]\geq\Theta[\nu;y]$.
\item If $\chi$  lies entirely outside the future of $N$, then any small open neighborhood of $p$ contains a point $y$ such that $\Theta[\chi; y]\leq\Theta[\nu;y]$.
\end{itemize} 
\end{lem}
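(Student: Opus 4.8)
The plan is to split the quantum expansion into a geometric and an entropic piece,
$\Theta[\sigma;y] = \theta[\sigma;y] + 4G\hbar\, \big(\partial_\lambda S_\mathrm{out}/\mathcal A\big)[\sigma;y] + (\text{counterterm pieces})$,
where $\theta$ is the ordinary classical expansion of the $\kappa$--congruence and the higher--curvature counterterms are local geometric functionals of the embedding, hence can be grouped with the area term. Thus the problem reduces to (i) a purely classical comparison of $\theta[\chi;\cdot]$ with $\theta[\nu;\cdot]$, and (ii) a comparison of the first variations of $S_\mathrm{out}$. The hypotheses are arranged so that both comparisons point the same way, and one then adds them.

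For (i) I would invoke the maximum principle for null hypersurfaces at a tangency. Working in Gaussian null coordinates adapted to $N$ near $p$, write $\chi$ as a graph over $N$; its transverse--null displacement $f(y)$ vanishes at $p$ together with its gradient, since $\chi$ and $N$ are tangent there. The hypothesis that $\chi$ lies outside $I^-(N)$ (resp.\ outside $I^+(N)$) forces $f$ to have a definite sign near $p$, i.e.\ a local maximum (resp.\ minimum) at $p$, so its transverse Laplacian $D^2 f\vert_p$ has a definite sign. Because $\theta[\chi;p]$ differs from $\theta[\nu;p]=\theta_N(p)$ at leading order precisely by this Laplacian term---the $O(f)$ cross--focusing contribution drops since $f(p)=0$---one obtains $\theta[\chi;p]\ge\theta[\nu;p]$ in the first case and $\theta[\chi;p]\le\theta[\nu;p]$ in the second. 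This is the same argument (with $\theta$ in place of $\Theta$) used in the classical area theorem of Ref.~\cite{BouEng15b}.

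For (ii) I would follow Wall and use strong subadditivity of the von Neumann entropy. The causal hypothesis translates, via the domain--of--dependence bookkeeping of Sec.~\ref{sec-derivation} and Fig.~\ref{fig-sets} (exactly as in Lemma~\ref{lem-kmono}), into a nesting of the exterior regions, $\Sigma_\mathrm{out}(\chi)\subseteq\Sigma_\mathrm{out}(\nu)$ (resp.\ $\supseteq$), compatible with the ``same side at $p$'' choice. Deform both $\chi$ and $\nu$ by a common infinitesimal null sliver along $\kappa$ in a neighbourhood of $p$---legitimate because the two surfaces are tangent at $p$ and share $\kappa$ there, so the identification $z=y$ is good up to the irrelevant $O(\delta^2)$ ambiguity. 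Strong subadditivity applied to $\Sigma_\mathrm{out}(\chi)$, $\Sigma_\mathrm{out}(\nu)$ and the sliver then says the first--order change of $S_\mathrm{out}$ is larger for the smaller region, i.e.\ $\partial_\lambda S_\mathrm{out}[\chi]\ge\partial_\lambda S_\mathrm{out}[\nu]$ (resp.\ $\le$). Adding this to the classical inequality yields the comparison of quantum expansions. The one genuinely non--pointwise feature enters here: since $S_\mathrm{out}$ is nonlocal, strong subadditivity constrains $\int(\partial_\lambda S_\mathrm{out}[\chi]-\partial_\lambda S_\mathrm{out}[\nu])\,g(y)$ against nonnegative profiles $g$ supported near $p$, not the integrand itself; combined with continuity of the geometric part and taking the neighbourhood small, this produces exactly the stated conclusion---every neighbourhood of $p$ contains a point $y$ with $\Theta[\chi;y]\ge\Theta[\nu;y]$ (resp.\ $\le$)---rather than an inequality at $p$ alone.

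The main obstacle, I expect, is the sign bookkeeping: verifying that the causal hypothesis, the sign of the transverse Hessian $D^2 f\vert_p$, the direction of the exterior--region nesting, and the orientation of the strong--subadditivity inequality are all mutually consistent, and that the two bullet points are genuinely interchanged under time reversal. A secondary technical point is the standard one of defining the variation of $S_\mathrm{out}$ under a null (rather than spacelike) deformation and justifying the use of strong subadditivity in that setting, handled as in the derivation of the quantum null energy condition from the QFC~\cite{BouFis15b}. Note that no step uses the QFC itself, so the lemma is, as advertised, purely geometric plus strong subadditivity and may be imported from Wall's work~\cite{Wal10QST}.
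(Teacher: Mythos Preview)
Your proposal is essentially a reconstruction of the proof of Wall's Theorem~1 in~\cite{Wal10QST}: split $\Theta$ into a classical expansion handled by a tangency/maximum-principle argument, and an entropic first variation handled by strong subadditivity after establishing a nesting of the exterior regions. This is correct in spirit, and you are right that no QFC input is needed.

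The paper, however, takes a much shorter route. It does not reprove Wall's result but simply invokes it. The entire proof consists of three moves: (i) observe that causality promotes the hypothesis ``$\chi$ lies outside $I^\mp(N)$'' to the statement that the full null hypersurface $N(\chi)=\dot K^+(\chi)$ lies nowhere to the past (resp.\ future) of $N$; (ii) note that with the exterior chosen on the $\kappa^a$ side, the first bullet is then literally Theorem~1 of~\cite{Wal10QST}, and the second bullet follows by swapping the roles of $N$ and $N(\chi)$; (iii) the opposite choice of exterior is reduced to the above by time reversal. So the paper's argument is structural/symmetry-based, delegating all analytic content to Wall.

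What each approach buys: yours is self-contained and makes explicit why the ``in every neighborhood there is a point $y$'' phrasing is forced (the nonlocality of $S_\mathrm{out}$ means strong subadditivity controls integrated, not pointwise, variations). The paper's approach is economical and avoids the sign bookkeeping you flag as the main obstacle, at the cost of a black-box citation. One caution about your sketch: the nesting you assert, $\Sigma_\mathrm{out}(\chi)\subseteq\Sigma_\mathrm{out}(\nu)$, does not follow directly from ``$\chi$ outside $I^-(N)$'' at the level of partial Cauchy surfaces; what the hypothesis actually gives is the causal ordering of the null hypersurfaces $N(\chi)$ and $N$, and the region nesting must be extracted from that via domains of dependence. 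This is exactly the step the paper's one-line causality remark is encoding, and it is where Wall's argument actually lives---so your step (ii) is correct once routed through the null hypersurfaces rather than the Cauchy slices directly.
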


\begin{proof}
By causality, the entire null hypersurface $N(\chi)$, defined as the boundary of $K^+(\chi)$ (Fig.~\ref{fig-sets-b}), is nowhere to the past of $N$ in the first case, and nowhere to the future of $N$ in the second case. If the outside is chosen to be the side to which $\kappa^a$ points, then the first claim is identical to Theorem 1 in~\cite{Wal10QST}, and the second claim follows by exchanging $N$ with $N(\chi)$. With the opposite choice of exterior, the proof can be reduced to the above cases by time reversal.
\end{proof}

The proof of Conjecture~\ref{conj} now proceeds in two steps. First we will combine the monotonicity property of $K^\pm$ with the QFC to show that $\alpha$ {\em must} have definite sign on a Q-screen $H$. Then we show that this implies the new GSL, Eq.~(\ref{eq-gsl}), if in addition $H$ is a past or future Q-screen. 

\begin{thm}\label{thm:structure} Let $H$ be a regular Q-screen in a spacetime satisfying the QFC, and let $\alpha$ be defined by Eq.~(\ref{eq-hlk}). Then $\alpha$ has definite sign on $H$. That is, either $\alpha<0$ everywhere on $H$, or $\alpha>0$ everywhere on $H$.
\label{theorem}
\end{thm}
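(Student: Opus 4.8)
The plan is to follow closely our classical proof that a regular holographic screen has $\alpha$ of definite sign~\cite{BouEng15b}, performing throughout the substitutions $\theta\to\Theta$, Null Energy Condition $\to$ QFC, and Penrose/Raychaudhuri-type focusing $\to$ Wall's Lemma~\ref{lem:aron}. Assume for contradiction that $\alpha$, defined by Eq.~(\ref{eq-hlk}), takes both signs on $H$; then by continuity and the second generic condition~\ref{def-technical2}, the set $H_0$ is nonempty and equals the relative boundary of both $H_-$ and $H_+$. By condition~\ref{def-technical3} and the attendant convention, $H$ contains a complete leaf $\sigma(0)$ on which $\alpha$ has definite sign, and we may assume $\alpha<0$ there (the case $\alpha>0$ is handled symmetrically, using the $\alpha>0$ alternatives in Lemmas~\ref{lem-kmono} and~\ref{lem:aron}). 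Let $H_\star$ be the connected component of $\{\alpha<0\}\subset H$ containing $\sigma(0)$. Since $H$ is connected and $\alpha$ is somewhere positive, $H_\star\neq H$, so $H_\star$ has nonempty relative boundary in $H$; by the above this boundary lies in $H_0$ and is accumulated by points of $H_+$. The goal is to derive a contradiction from the existence of such a boundary leaf, which forces $\alpha$ to have had definite sign after all.

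On $H_\star$ the hypothesis $\alpha<0$ activates Lemma~\ref{lem-kmono}: for $r_1<r_2$ within $H_\star$ the regions $K^+(r)$ are strictly nested, $\bar K^+(r_1)\subset K^+(r_2)$, so the null hypersurfaces $N(r)=\partial K^+(r)$ form a genuine null foliation of a neighborhood of $H_\star$ and no two of them meet away from their leaves (Fig.~\ref{fig-sets-c}). Let $\sigma(r_*)$ be a leaf of $H$, with $r_*$ interior to the foliation interval, lying on the relative boundary of $H_\star$; it contains a point $p$ with $\alpha(p)=0$, so $h^a=\beta k^a$ at $p$ and the screen is tangent there to the null generator of $N(r_*)$ in the $k^a$ direction. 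Applying Lemma~\ref{lem-kmono} also to a small $\alpha>0$ portion beyond $r_*$ (one exists by~\ref{def-technical2}), and matching the two nestings by continuity of $\sigma\mapsto K^+(\sigma)$ at $r=r_*$, one finds that the leaves of $H$ near $\sigma(r_*)$ all lie on one side of $N(r_*)$ near $p$: the screen has, locally, a one-sided extremum with respect to $K^+(r_*)$.

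The crux is to convert this extremal configuration into a genuine tangency between some leaf $\sigma'$ of the foliation and some member $N''$ of the null family, with $\sigma'$ lying entirely on the correct side of $N''$. By a squeezing argument over the nearby leaves on the two sides of $r_*$ (as in~\cite{BouEng15b}), one produces such a pair in which $N''$ is the null hypersurface of a nearby marginal leaf $\sigma''$ and the tangency occurs on the half of $N''$ swept out in the $+k$ direction from $\sigma''$, where the quantum generic condition~\ref{def-technical1} together with the QFC in the form~(\ref{eq-wqfc}) force the quantum expansion of the relevant cross-section $\nu''$ to be strictly negative, $\Theta[\nu'';y]<0$, near the tangency; moreover $\sigma'$ sits on the side of $N''$ (``outside its future'') for which Lemma~\ref{lem:aron} yields, at some nearby $y$, the inequality $\Theta[\sigma';y]\le\Theta[\nu'';y]$. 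Combining, $\Theta[\sigma';y]<0$. But $\sigma'$ is a leaf of $H$, hence quantum marginal, so $\Theta[\sigma']\equiv 0$ --- a contradiction. Hence $H_\star$ has no relative boundary in $H$; being also open, nonempty, and $H$ connected, $H_\star=H$, so $\alpha<0$ everywhere, contradicting the assumption that $\alpha$ changes sign. Therefore $\alpha$ has definite sign on $H$.

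I expect the main obstacle to be precisely this crux, together with the tangency setup preparing it: classically this is the point where Penrose's theorem (or a Raychaudhuri focusing estimate) is invoked, and here it is replaced by Lemma~\ref{lem:aron}. The delicate steps are (i) the limiting argument upgrading ``squeezed between two null hypersurfaces'' to an honest tangency (touching, not merely sandwiched), which needs compactness and connectedness of the leaves and smoothness of $H$; (ii) verifying that the tangency falls on the $+k$ half of $N''$, where the generic condition plus the QFC~(\ref{eq-wqfc}) deliver \emph{strict} negativity of the quantum expansion, rather than on the other half, where nothing is known; and (iii) checking that $\sigma'$ sits on the side of $N''$ for which Lemma~\ref{lem:aron} returns the useful inequality $\Theta[\sigma']\le\Theta[\nu'']$ rather than the harmless reverse one. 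Once this sign-bookkeeping is secured the contradiction with marginality is immediate, and running the argument toward decreasing $r$ closes off the other side identically.
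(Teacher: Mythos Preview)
Your outline follows the same overall strategy as the paper's proof: assume $\alpha$ changes sign, use Lemma~\ref{lem-kmono} on the $\alpha<0$ region to nest the $K^\pm$, locate a tangency between a leaf and a null hypersurface $N(r)$, and combine the QFC with the quantum generic condition and Lemma~\ref{lem:aron} to contradict marginality. You have also correctly flagged the crux and its difficulties (i)--(iii).

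What your sketch omits, and what resolves precisely those difficulties in the paper, is a case analysis on the sign of $\beta$ at the transition set. The paper does not use a squeezing/limit argument to produce the tangency; instead it pushes slightly past the first transition (to $\sigma(1+\epsilon)$), follows the integral curves of $h^a$ from the $\alpha>0$ portion $\sigma^+(1+\epsilon)$ backward in $r$ until they meet $N(1+\epsilon)$, and takes the leaf $\sigma(R)$ through the global minimum of $r$ on that intersection set $\phi$. Whether the curves meet $N^+(1+\epsilon)$ or $N^-(1+\epsilon)$ is determined by the sign of $\beta$: in Case~1 ($\beta>0$ on every transition neighborhood) the tangency lies on $N^-(1+\epsilon)$, so the QFC plus the generic condition give $\Theta_k[\nu;y]>0$ and hence $\Theta_k[\sigma(R);y]>0$ --- the \emph{opposite} sign from the contradiction you wrote down. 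Case~2 ($\beta<0$ on some neighborhood) is not handled by a direct tangency argument at all; instead an argument imported from Ref.~\cite{BouEng15b} shows that it forces a Case-1 transition elsewhere on $H$, under the reversed flow. Cases~3 and~4 deal with transitions toward $r<0$. So your difficulties (ii) and (iii) are not end-of-proof bookkeeping: the sign of $\beta$ is the input that fixes both the half of $N$ on which the tangency falls and the side on which the tangent leaf sits, and the two signs of $\beta$ require genuinely different arguments, one of which is a reduction rather than a direct tangency.
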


\begin{proof} 


By the condition~\ref{def-technical}.c and the subsequent convention, $\alpha$ has definite sign on the leaf $\sigma(0)$. If $\alpha>0$ at $r=0$, we can reparametrize $r\to -r$, so without loss of generality we may assume that $\alpha<0$ at $r=0$. We will now show that $\alpha<0$ everywhere on $H$.

Suppose for contradiction that $H$ contains a point with $\alpha\geq 0$. Then the subset $H_+\subset H$ of points with $\alpha> 0$ is also nonempty, by Assumption~\ref{def-technical}.b. Continuity guarantees that $\alpha<0$ in an open neighborhood of the leaf $\sigma(0)$, so $H_+$ has a connected component entirely in the $r>0$ region, or entirely in the $r<0$ region (or both). We first consider the case $r>0$.

It is convenient to rescale $r$ to set
\begin{equation}
1=\inf\{r:r>0,\sigma(r)\cap H_+ \neq \varnothing\}~.
\end{equation}
Then by the second generic condition~\ref{def-technical}.b, $\alpha<0$ for all leaves $\sigma(r)$ with $0<r<1$. Hence by Lemma~\ref{lem-kmono}, there exists an open neighborhood of $\bar K^-(1)$ that is contained in $K^-(0)$, and for sufficiently small $\epsilon$ we have
\begin{equation}
K^-(0)\supset K^-(1+\epsilon)~.
\label{eq-nocon}
\end{equation}

By continuity, the set $P$ of points on $\sigma(1)$ with $\alpha=0$ is nonempty. $P$ may consist of several connected components $P_i$. We cannot assume that $\beta$ is of fixed sign for $0<r<1$. But since $\alpha$ and $\beta$ cannot vanish simultaneously, $\beta$ has fixed sign in an open neighborhood $O(P_i)$ of each $P_i$. However, $\beta$ need not have the same sign in all of these neighborhoods. We distinguish two complementary cases.

\paragraph*{Case 1} We first consider the case where $\beta>0$ in every $O(P_i)$.  Then the assumed sign change from $\alpha<0$ to $\alpha>0$ corresponds to a transition of $h^a$ from spacelike-outward ($S_{-+}$) to timelike-future-directed ($T_{++}$).  

Let $\sigma^+(1+\epsilon)$ be the set of points with $\alpha>0$ on the leaf $\sigma(1+\epsilon)$. Note that $\sigma^{+}(1+\epsilon)$ may be disconnected, but each disconnected component is open. 

By choosing $\epsilon$ sufficiently small, we can ensure that each connected component of $\sigma^+(1+\epsilon)$ is contained in a single neighborhood $O(P_i)$. Let $\Gamma$ be the set of integral curves of $h^a$ that pass through $\sigma^+(1+\epsilon)$. Note that each such curve can also be parametrized by $r$. 

Because $\alpha>0$, each curve in $\Gamma$ lies in $K^-(1+\epsilon)$ in some range $r_\phi< r<1+\epsilon$. By Eq.~(\ref{eq-nocon}), $\sigma(0)\cap K^-(1+\epsilon)=\varnothing$, so $r_\phi>0$. At $r_\phi$, the curve intersects the boundary $N(1+\epsilon)$ of $K^-(1+\epsilon)$. Because $\beta>0$ in $O(p)$, this intersection will be with $N^-(1+\epsilon)$. By smoothness and the second generic assumption, the intersection will consist of one point per curve, $r=r_\phi$.

Let the spatial surface $\phi$ be the set of points $r=r_\phi$ of the curves in $\Gamma$. The sets $\phi$ and $\sigma^+(1+\epsilon)$ have the same topology because the integral curves define a continuous, one-to-one map between them. The closures of both sets, $\bar\sigma^+(1+\epsilon)$ and $\bar\phi$, are also related by this map and share a boundary at $r=1+\epsilon$.

Since $\bar\sigma^+(1+\epsilon)$ is a closed subset of a compact set, it is compact; and by the fiber map, $\bar\phi$ is also compact. Therefore the global minimum $R\equiv\inf\{r(p): p\in \phi\}$ is attained on one or more points $Q\subset \bar\phi$. Since $R<1$ but $\dot\phi\subset \sigma(1+\epsilon)$, $Q\notin\dot\phi$, so $Q$ consists of stationary points of $r$, viewed as a function on $\phi$. Hence the leaf $\sigma(R)$ is tangent to the null hypersurface $N^{-}(1+\epsilon)$ at $Q$.

Because $Q$ achieves a global minimum of $r$ on $\bar\phi$, $\sigma(R)$ lies nowhere in the past of $N^{-}(1+\epsilon)$. Let the spacelike surface $\nu\supset Q$ be a compact cross-section of $N(1+\epsilon)$; since $N(1+\epsilon)$ is spacetime-splitting, $\nu$ will be Cauchy splitting. Because $\sigma(R)$ is tangent to $N(1+\epsilon)$ at $Q$, Lemma~\ref{lem-kmono} implies that any open neighborhood of $Q$ contains a point $y$ such that $\Theta_k[\sigma(R);y]\geq\Theta_k[\nu;y]$. By the QFC, $\Theta_k[\nu,y]\geq\Theta_k[\sigma(1+\epsilon),y]=0$; and by the first generic condition, the inequality is strict, so $\Theta_k[\nu,y]>0$. Hence $\Theta_k[\sigma(R);y]>0$. But this contradicts the defining property of a Q-screen, that the quantum expansion of each leaf $\sigma$ in the $k^a$ direction must vanish.

\paragraph*{Case 2} We now consider the case where $\beta<0$ in at least one open neighborhood $O(P_1)$. We showed in~\cite{BouEng15b} that this implies the existence of a transition with $\tilde \beta>0$ elsewhere on $H$, on a leaf $\sigma(2)$, under reversal of the flow direction, $\tilde r\equiv 3-r$. (We use the tilde to denote quantities defined with respect to the reverse flow.) Upon closer inspection, one finds that our argument establishes a stronger result that was not needed in~\cite{BouEng15b}: that $\tilde \beta>0$ on {\em all} neighborhoods $O(\tilde P_j)$ of transition points on $\sigma(2)$. Namely, we showed that the only type of $\alpha<0$ region that can end at $\sigma(2)$ under the original flow is a timelike region, i.e., $\beta = -\tilde \beta <0$; see Fig.~8 of Ref.~\cite{BouEng15b}. This implies a case 1 transition (in the sense of the present paper) on $\sigma(2)$. Since we have already shown that case 1 transitions are impossible, we can now conclude that case 2 transitions are also impossible.

\begin{figure}[t]
\includegraphics[width=.45 \textwidth]{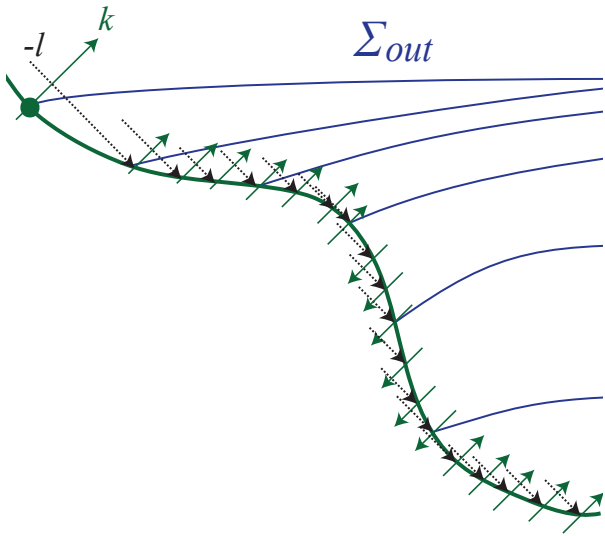}
\caption{The flow from leaf to leaf along a Q-screen can be decomposed as a sequence of infinitesimal motions in the $k^a$ and $l^a$ null directions. In the $\pm k^a$ direction, the generalized entropy is locally stationary by definition of the Q-screen, $\Theta_k=0$. Because $\alpha<0$ by Theorem~\ref{thm:structure}, the motion is always towards $-l^a$, along which the generalized entropy increases since $\Theta_l<0$. Hence the generalized entropy increases along the flow.}
\label{fig-zigzag}
\end{figure}

\paragraph*{Cases 3 and 4} Now suppose that a transition to $\alpha>0$ occurs at some $r<0$. Case 3 arises if $\beta>0$ everywhere at the onset of the transition. Case 4 is the complementary case where $\beta<0$ in at least one connected component. A straightforward adaptation of the case 1 and 2 analyses as in~\cite{BouEng15b} rules out the possibility of case 3 and 4 transitions.

In summary, since $\alpha<0$ at $r=0$ and no transitions to $\alpha>0$ are possible, it follows from the second generic condition~\ref{def-technical2} that $\alpha<0$ everywhere on $H$.
\end{proof}

The flow along $H$ with increasing $r$ can be deformed into a ``zig-zag'' flow along null surfaces orthogonal to the leaves $\sigma(r)$ and $\sigma(r+dr)$; see Fig.~\ref{fig-zigzag}, and see Ref.~\cite{BouEng15a} for further details. Locally the flow will be in the $+k^a$ direction where $H$ is spacelike; it will be in the $-k^a$ direction where $H$ is timelike. But because $\alpha<0$, the flow will always be in the $-l^a$ direction, never in the $+l^a$ direction. That is, the flow towards larger $r$ corresponds to a flow to the exterior or past. 

We now show that Theorem~\ref{theorem} implies Conjecture~\ref{conj}, by applying it to the case where the regular Q-screen $H$ is in addition past or future, as assumed in our conjecture. 
\begin{proof} By the definition of a future Q-screen, each of its leaves is marginally quantum trapped, with $k^a$ being the marginal direction. Thus to first order in $dr$, the generalized entropy does not change in the $k^a$ direction, and it strictly increases in the $-l^a$ direction. This implies the new GSL, Eq.~(\ref{eq-gsl}), for future holographic screens: 
\begin{equation}
\frac{d S_\mathrm{gen}}{dr}>0~.
\end{equation}
Similarly, the new GSL follows for past Q-screens, where the generalized entropy increases in the $+l^a$ direction, i.e., towards the exterior (in spacelike portions of $H$) or the future (in the timelike portions).
\end{proof}

We stress again that the QFC is itself unproven; we have established a logical relation between what we regard as two plausible conjectures. We also obtained Theorem~\ref{theorem} as a key intermediate result. If we do not wish to assume the QFC, then Theorem~\ref{theorem} can still be considered, as the first of the two stronger conjectures we made in Sec.~\ref{sec-newgsl}. 
\begin{rem} 
The above short proof establishes that Theorem~\ref{theorem} (viewed as a conjecture) is indeed stronger than our new GSL, Conjecture~\ref{conj}. 
\label{final}
\end{rem}
In Sec.~\ref{sec-newgsl} we further claimed the following equivalence:
\begin{cor}
Theorem~\ref{theorem} holds if and only if $N(r)$ intersects the regular Q-screen $H$ only on $\sigma(r)$. 
\label{coro}
\end{cor}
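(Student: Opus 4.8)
The plan is to prove both implications as purely geometric, QFC-free statements about the regular Q-screen $H$: the ``only if'' direction from Lemma~\ref{lem-kmono}, and the ``if'' direction from the metric/causal core of the proof of Theorem~\ref{theorem}. Throughout, write $N(r)\equiv N(\sigma(r))$ for the null hypersurface orthogonal to $\sigma(r)$ that bounds $K^\pm(r)$.

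For the forward direction, suppose $\alpha$ has definite sign on $H$; replacing $r\to -r$ if necessary, assume $\alpha<0$ everywhere. Fix $r$ and any $r'\ne r$, say $r'>r$. By Lemma~\ref{lem-kmono}, in the strengthened form already used in the proof of Theorem~\ref{theorem} (that $\bar K^+(r)$ lies in an open set contained in $K^+(r')$, hence $\bar K^+(r)\subseteq\mathrm{int}\,K^+(r')$), and since $N(r)\subseteq\bar K^+(r)$ while $\sigma(r')$ is disjoint from $\mathrm{int}\,K^+(r')$ --- being contained in the null hypersurface $N(r')$ that separates $K^+(r')$ from $K^-(r')$ --- we conclude $\sigma(r')\cap N(r)=\varnothing$. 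Taking the union over $r'\ne r$ and adding $\sigma(r)\cap N(r)=\sigma(r)$ gives $H\cap N(r)=\sigma(r)$. The case $\alpha>0$ is identical.

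For the converse I would argue the contrapositive: if $\alpha$ is not of definite sign on $H$, then some $N(r)$ meets $H$ outside $\sigma(r)$. Assuming $\alpha$ changes sign, regularity together with continuity near the complete leaf $\sigma(0)$ (on which $\alpha$ has fixed sign, say negative) puts us, after the same infimum/rescaling construction, in one of Cases~1--4 of the proof of Theorem~\ref{theorem}. In Case~1 that construction produces --- using only Lemma~\ref{lem-kmono}, compactness of the leaves, continuity, and the second generic condition~\ref{def-technical2}, but \emph{not} the QFC or the first generic condition --- a leaf $\sigma(R)$ with $R<1$ that is tangent at a point $Q$ to the null hypersurface $N(1+\epsilon)$ (in fact to $N^-(1+\epsilon)$). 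Then $Q\in N(1+\epsilon)\cap\sigma(R)\subseteq N(1+\epsilon)\cap H$, while $Q\notin\sigma(1+\epsilon)$ because distinct leaves are disjoint and $R<1+\epsilon$; this contradicts $N(1+\epsilon)\cap H=\sigma(1+\epsilon)$. Cases~2--4 reduce, exactly as in the proof of Theorem~\ref{theorem}, to a Case~1 transition on another leaf for a possibly reversed flow, and since tangency of a leaf to a null hypersurface is reparametrization-independent, the same contradiction results.

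The step I expect to be the main obstacle is verifying that the tangency point $Q$ is indeed produced \emph{before} any use of the QFC --- the QFC and the first generic condition enter the proof of Theorem~\ref{theorem} only at its final line, which is simply dropped here --- and that the reductions of Cases~2--4 to Case~1 really do yield such a tangency rather than merely ``a contradiction''; inspecting the relevant argument of Ref.~\cite{BouEng15b} should confirm both. A more routine point is the topological bookkeeping in the forward direction: that the strict inclusions of Lemma~\ref{lem-kmono} force $N(r)\cap N(r')=\varnothing$ for $r\ne r'$, i.e.\ that the $N(r)$ foliate the region they sweep out (cf.\ Fig.~\ref{fig-sets-c}).
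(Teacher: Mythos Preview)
Your proposal is correct and follows essentially the same approach as the paper's proof. The paper's argument is terser but identical in substance: for \emph{only if} it assumes a point $p\in N(r_1)\cap\sigma(r_2)$ with $r_2\neq r_1$ and derives a contradiction from Lemma~\ref{lem-kmono} exactly as you do; for \emph{if} it simply observes that the proof of Theorem~\ref{theorem}, under the assumption that $\alpha$ changes sign, already constructs the tangency point $Q\in\sigma(R)\cap N(1+\epsilon)$ before invoking the QFC --- precisely the observation you flag as the main thing to verify.
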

\begin{proof}
To prove {\em if}, suppose first that $\alpha$ did change sign on $H$. This was in fact assumed in our proof of Theorem~\ref{theorem}, and it was shown to imply that some $N(r)$ will intersect $H$ at a point that is not contained in $\sigma(r)$. To prove {\em only if}, suppose that there existed some $N(r_1)$ that intersects $H$ at a point $p\in \sigma(r_2)$ with $r_2\neq r_1$. We may assume that $r_2>r_1$ by setting $r\to -r$ as needed. Since $\sigma(r_2)\subset \bar K^\pm(r_2)$, Lemma~\ref{lem-kmono} implies that $p\in K^+(r_1)$ if $\alpha>0$ everywhere, and that $p\in K^-(r_1)$ if $\alpha<0$ everywhere on $H$. But this is impossible since $K^\pm$ are open sets and $p\in N(r_1)= \dot K^\pm(r_1)$. Hence $\alpha$ must change sign on $H$.
\end{proof}

\end{spacing}
\acknowledgments
It is a pleasure to thank Z.~Fisher and A.~Wall for discussions. NE thanks the Berkeley Center for Theoretical Physics and the UC Berkeley Physics Department for their hospitality. The work of RB is supported in part by the Berkeley Center for Theoretical Physics, by the National Science Foundation (award numbers 1214644, 1316783, and 1521446), by fqxi grant RFP3-1323, and by the US Department of Energy under Contract DE-AC02-05CH11231. The work of NE is supported in part by the US NSF Graduate Research Fellowship under Grant No. DGE-1144085, by NSF Grant No. PHY-1504541, and by funds from the University of California.

\bibliographystyle{utcaps}
\bibliography{all}
\end{document}